\DeclareMathAlphabet{\pazocal}{OMS}{zplm}{m}{n}
\definecolor{brickred}{rgb}{0.8, 0.25, 0.33}
\newcommand\myshade{85}
\colorlet{mylinkcolor}{BrickRed}
\colorlet{mycitecolor}{NavyBlue}
\colorlet{myurlcolor}{Aquamarine}
\begin{document}

\title{Shannon invariants: A scalable approach to information decomposition}%

\author{Aaron J. Gutknecht}
\email{agutkne@uni-goettingen.de}
\affiliation{Göttingen Campus Institute for Dynamics of Biological Networks, Georg-August University Göttingen}

\author{Fernando E. Rosas}
\email{f.rosas@sussex.ac.uk}
\affiliation{Sussex Centre for Consciousness Science and Sussex AI, Department of Informatics, University of Sussex}
\affiliation{Center for Psychedelic Research and Centre for Complexity Science, Department of Brain Science, Imperial College London}
\affiliation{Center for Eudaimonia and Human Flourishing, University of Oxford}
\affiliation{Principles of Intelligent Behaviour in Biological and Social Systems (PIBBSS)}

\author{David A. Ehrlich}
\affiliation{Göttingen Campus Institute for Dynamics of Biological Networks, Georg-August University Göttingen}
\affiliation{Max Planck Institute for Dynamics and Self-Organization, Göttingen}

\author{Abdullah Makkeh}
\affiliation{Göttingen Campus Institute for Dynamics of Biological Networks, Georg-August University Göttingen}
\affiliation{Max Planck Institute for Dynamics and Self-Organization, Göttingen}

\author{Pedro A. M. Mediano}
\affiliation{Department of Computing, Imperial College London}
\affiliation{Division of Psychology and Language Sciences, University College London}

\author{Michael Wibral}
\affiliation{Göttingen Campus Institute for Dynamics of Biological Networks, Georg-August University Göttingen}
\affiliation{Max Planck Institute for Dynamics and Self-Organization, Göttingen}

\newcommand{\equalcontrib}{These authors contributed equally to this work.}

\newtheorem{definition}{Definition}
\newtheorem{conjecture}{Conjecture}
\newtheorem{theorem}{Theorem}
\newtheorem{lemma}{Lemma}
\newtheorem{proposition}{Proposition}
\newtheorem{corollary}{Corollary}
\newtheorem{example}{Example}
\newtheorem{remark}{Remark}

\begin{abstract}
\noindent
Distributed systems, such as biological and artificial neural networks, process information via complex interactions engaging multiple subsystems, resulting in high-order patterns with distinct properties across scales. 
Investigating how these systems process information remains challenging due to %
difficulties in defining appropriate multivariate metrics and ensuring their scalability to large systems. To address these challenges, we introduce a novel framework based on what we call \textit{Shannon invariants} --- quantities that capture essential properties of high-order information processing in a way that only depends on the definition of entropy, %
and can be efficiently calculated %
for large systems. %
Our theoretical results demonstrate how Shannon invariants can be used to resolve long-standing ambiguities regarding the interpretation of widely used multivariate information-theoretic measures. %
Moreover, our practical results reveal distinctive information-processing signatures of various deep learning architectures across layers, which lead to new insights related to how these systems process information and how this is attained throughout training. Overall, our framework resolves fundamental limitations in analysing high-order phenomena and offers broad opportunities for theoretical developments and empirical analyses.
\end{abstract}

\maketitle
\begingroup
\renewcommand\thefootnote{}\footnotetext{\equalcontrib}
\addtocounter{footnote}{-1}
\endgroup

\section{Introduction}

Understanding how information is processed and distributed across systems is a fundamental challenge in multiple scientific domains, from genetics and neuroscience to machine learning and artificial intelligence~\cite{waldrop1993complexity,gleick2011information,bengio2024machine}. Advances in our understanding of distributed information phenomena have led to powerful new insights into, for example, the inner workings of natural evolution~\cite{rajpal2023quantifying}, genetic information flow~\cite{cang2020inferring,park2021higher}, psychometric interactions~\cite{marinazzo2022information,varley2022untangling}, cellular automata~\cite{rosas2018information,orio2023dynamical},
and polyphonic music~\cite{scagliarini2022quantifying}. 
Analyses based on distributed information phenomena have been particularly useful for studying biological~\cite{gatica2021high,luppi2022synergistic,varley2023information,varley2023multivariate,herzog2024high,pope2024time,varley2024emergence} and artificial~\cite{tax2017partial,proca2022synergistic,kaplanis2023learning,kong2023interpretable} neural systems, shedding light on the distinct roles of qualitatively different types of information in neural dynamics~\cite{luppi2024information}.

The \emph{partial information decomposition} (PID) framework provides a comprehensive theoretical solution to the challenge of understanding distributed information processing by decomposing information carried by multiple source variables about a target into redundant, synergistic, and unique contributions~\cite{williams2010nonnegative,lizier2018information,mediano2021towards}. 
PID has been used, for example, to investigate how artificial neural networks encode and distribute information across multiple nodes or layers~\cite{proca2022synergistic,ehrlich2023a}, to develop new types of self-organising AI systems that explicitly optimise specific PID-defined objectives~\cite{schneider2025should,makkeh2025general, wibral2017partial}, and to explore how redundancies and synergies complement each other in balancing robustness and flexibility~\cite{orio2023dynamical,varley2024evolving,luppi2024information}.

Unfortunately, despite its promise, PID suffers from important shortcomings related to its operationalisation and scalability. In fact, there are multiple ways to measure redundancy and synergy \cite{williams2010nonnegative, rosas2020operational, finn2018pointwise, makkeh2021isx, Kolchinsky2022, van2023pooling}, and it is not always straightforward to pick the optimal measure for a specific analysis (for a comprehensive overview and categorization of these measures, see \cite{Gutknecht2025babel}). Moreover, the computational complexity of PID grows super-exponentially with the number of variables involved, rendering the full decomposition computationally infeasible beyond a handful of variables~\cite{jansma2024fast, gutknecht2021bits}. %
Together, these limitations seriously hinder the capability of PID to address practical applications.

Here we address these  challenges %
by introducing a novel approach to information decomposition inspired by statistical mechanics. In statistical mechanics, systems with an enormous number of particles are analysed by focusing on global properties --- such as their average energy --- which can be calculated without knowledge of microscopic details~\cite{jaynes1957information,rosas2024software}. Similarly, in the context of PID, one deals with a vast number of `information atoms'. While it is impractical to measure each information atom individually, we demonstrate that global properties can be determined %
without needing to resolve the detailed structure of the PID. 
We introduce two such average properties: the \textit{
average degree of redundancy}, which captures the extent to which information can be accessed through multiple variables, and the \textit{average degree of vulnerability}, which reflects how information becomes inaccessible when specific sources are removed.  

We illustrate the benefits of these two Shannon invariants in the theoretical and practical domains. On the theoretical side, we show how the degree of redundancy provides a clear explanation of what is being calculated by the `redundancy-synergy index'~\cite{nips02-NS02}, a well-known metric in the computational neuroscience literature whose interpretation has nevertheless remained elusive~\cite{timme2014synergy,rosas2024characterising}. We also show how the degree of vulnerability naturally leads to a novel quantity, which we call `dual redundancy-synergy index'. On the practical side, we show how Shannon invariants allow us to analyse deep learning systems, including a feed-forward image classifier and a variational autoencoder. For these systems, our results show that Shannon invariants track distinct representational properties between layers and also distinct learning dynamics, providing useful tools for AI interpretability.

\section{Information decomposition and Shannon-invariant quantities}

\subsection{%
Shannon-invariance}

Let us consider a scenario where $n$ random variables $\bm X = (X_1,\dots,X_n)$, called \textit{sources}, provide information about a \textit{target} variable $Y$. The total information about $Y$ provided by the $\bm X$ can be quantified by Shannon's mutual information $I(\bm X;Y)$ \cite{cover1999elements}. While this quantity describes \emph{how much} information is provided, it does not explain \emph{in what way} this information is distributed over the various source variables --- e.g., if some of this information is redundantly or synergistically contained in multiple components. This second question is addressed by \emph{partial information decomposition} (PID) \cite{williams2010nonnegative}, which states that the total information $I(\bm X;Y)$ can be decomposed as
\begin{equation}
    I(\bm X;Y) = \sum_{\alpha\in\mathcal{A}_n} \Pi(\alpha),
    \label{eq:PID}
\end{equation}
where each information atom $\Pi(\alpha)$ quantifies a distinct informational contribution. Here, each $\alpha$ is a collection of subsets of sources such as $\{\{1\},\{2\}\}$, or $\{\{1\},\{2,3\}\}$, or simply $\{\{1\}\}$. Intuitively, the atom $\Pi(\alpha)$ quantifies the information about $Y$ that can be obtained if and only if one knows all the sources in at least one of the subsets in $\alpha$. When considering $n=2$ sources, there are four such information atoms and Eq.~\eqref{eq:PID} becomes
\begin{equation}
    I(X_1,X_2;Y) = \Pi(\{1\},\{2\}) + \Pi(\{1,2\}) + \sum_{i=1}^2 \Pi(\{i\}),
\end{equation}
where the resulting atoms can be described as:
\begin{itemize}
    \item \textit{Unique information} $\Pi(\{1\})$ and $\Pi(\{2\})$ contained in $X_1$ or $X_2$, respectively.
    \item \textit{Redundant information} $\Pi(\{1\}\{2\})$ contained in both $X_1$ and $X_2$.
    \item \textit{Synergistic information} $\Pi(\{1,2\})$ contained in $X_1$ and $X_2$ jointly, but not contained in either of them individually.
\end{itemize}
More details about PID can be found in Appendix~\ref{sec:PID}.

While calculating individual information atoms $\Pi(\alpha)$ requires going beyond Shannon information theory~\cite{james2017multivariate}, certain linear combinations of atoms are determined by classical information theoretic quantities. For example, if $n=2$ then the mutual information $I(X_1;Y)$ can be decomposed as $I(X_1;Y) = H(X_1)+H(Y)-H(X_1,Y) = \Pi(\{1\}) + \Pi(\{1,2\})$, %
showing that $\Pi(\{1\}) + \Pi(\{1,2\})$ does not depend on how $\Pi(\{1\})$ and $\Pi(\{1,2\})$ are measured. %
Building on this idea, we introduce the notion of \textit{Shannon-invariance} in the next definition.

\begin{definition}
    A linear combination of atoms is said to be \emph{Shannon-invariant} if it can be calculated solely from Shannon's entropies. %
\end{definition}

A well-known example of a non-trivial Shannon invariant is the difference between redundancy and synergy for $n=2$ sources, which can be shown to be given by~\cite{williams2010nonnegative}
\begin{equation}
    \Pi(\{1\},\{2\}) - \Pi(\{1,2\}) = I(X_1;Y) - I(X_2;Y|X_1)
\end{equation}
where $I(X_1;Y) - I(X_2;Y|X_1)$ is known as \emph{interaction information}~\cite{mcgill1954multivariate}. This shows that while the value of $\Pi(\{1\},\{2\})$ and $\Pi(\{1,2\})$ may depend on the way one defines these quantities, their difference does not --- it depends only on Shannon's entropy.

Overall, Shannon-invariant quantities are theoretically attractive, as they establish properties that must be satisfied by any PID. Additionally, Shannon-invariant quantities are particularly useful for practical analyses. In the following, we will introduce two nontrivial Shannon invariants that correspond to distinct perspectives over redundancy and synergy. %

\subsection{Degree of redundancy}

When considering only two sources there is just one `type' of redundancy --- namely, the one involving both sources, which is captured by the information atom $\Pi(\{1\}\{2\})$. However, the concept of redundancy is far richer in analysing systems with more than two sources. For example, when considering $n = 4$ sources, redundancy can involve information that is contained in two sources (e.g., $\Pi(\{1\}\{3\})$), three sources (e.g., $\Pi(\{1\}\{3\}\{4\})$), or all four sources ($\Pi(\{1\}\{2\}\{3\}\{4\})$). More intricate cases arise when an atom involves multiple single sources as well as combinations of sources such as $\Pi(\{1\}\{2\}\{3,4\})$, where information about $Y$ is redundantly present in the sources $X_1$, $X_2$ and the jointly considered set of sources $\{X_3,X_4\}$.

Although information can be redundantly shared between sources in many different ways, a natural way to summarise the \textit{degree of redundancy}
of an atom is to count the number of individual sources $X_i$ through which it can be accessed. This can be formalised as follows:

\begin{definition}
    For a given information atom $\Pi(\alpha)$, its \textit{degree of redundancy} $r(\alpha)$ is given by
\begin{equation}
r(\alpha) := \big|\big\{i\in [n] \:  | \: \{i\} \in \alpha\big\} \big|.
\end{equation}
where $[n] = \{1,\ldots,n\}$.
\end{definition}

For example, information atoms such as $\Pi(\{1,2\})$ have $r(\alpha) = 0 $ since they cannot be accessed from any individual source; $\Pi(\{1\}\{2,3\})$ has $r(\alpha)=1$ since it can be accessed from one individual source; and $\Pi(\{1\}\{2\})$ has $r(\alpha)=2$ since it can be accessed from two sources. It can be argued that the information that is `genuinely redundant' consists of atoms with $r(\alpha) > 1$, while information atoms with $r(\alpha)=0$ are synergistic in the sense that they cannot be accessed from any individual source. The degree of redundancy of the atoms in the case $n=3$ sources is illustrated in Figure \ref{fig:illustration_r_v}.

\begin{figure*}[t]
    \centering
    \includegraphics[width=\linewidth]{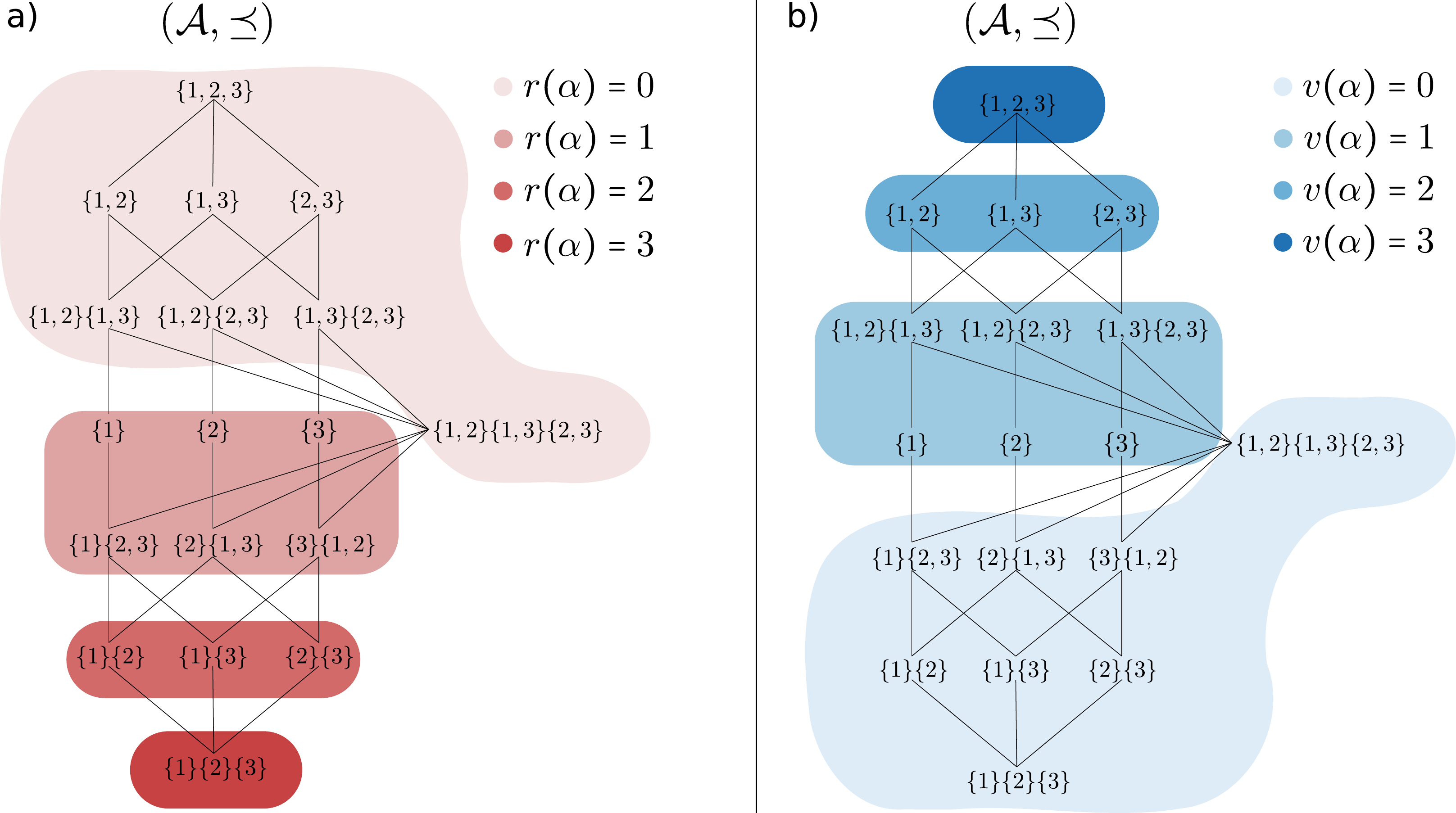}
    \caption{Illustration for $n=3$ of the grouping of atoms in terms of their a) degree of redundancy, and b), degree of vulnerability.}
    \label{fig:illustration_r_v}
\end{figure*}

Given a set of source variables, one may further ask: what is the \textit{average degree of redundancy} of the information they provide about the target? This quantity is defined as
\begin{equation}
\bar{r} := \sum_{k=0}^n k \sum_{r(\alpha)=k} \frac{\Pi(\alpha)}{I(\bm X;Y)}.
\end{equation}
Thus, $\bar{r}$ represents the \textit{weighted average} of the possible degrees of redundancy $k$ (ranging from $0$ to $n$), where the degree $k$ is weighted by the fraction of the joint mutual information $I(\bm X;Y)$ contributed by information atoms with degree $k$. It can be shown that $\bar{r}\in[0,n]$, with large values of $\bar{r}$ characterising redundancy-dominated systems and $\bar{r} \ll 1$ synergy-dominated systems (see Appendix~\ref{app:bounds}). Also, note that $\bar{r}$ is ill-defined if $I(\bm X;Y) = 0$.  
Other constructions that estimate an average degree of information atoms have been considered in~\cite{varley2022emergence,ehrlich2023a}.

Remarkably, while the total information at a given value of $r(\alpha)$ depends strongly on the specific PID measure used to quantify the atoms, our next result shows that the value of $\bar{r}$ does not, since it is a Shannon-invariant.%

\begin{proposition}\label{lemma:R}
The average degree of redundancy is a Shannon-invariant of the distribution $p_{\bm X,Y}$, and its value can be calculated as
\begin{equation}
    \bar{r} = \frac{\sum_{i=1}^n I(X_i:Y)}{I(\bm X;Y)}.
\end{equation} 
\end{proposition}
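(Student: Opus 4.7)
The plan is to manipulate the definition of $\bar{r}$ by interchanging summations and recognising the resulting partial sums as single-variable mutual informations via the PID consistency relations, which would simultaneously establish the closed form and prove Shannon-invariance (since the right-hand side depends only on Shannon entropies).

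First I would rewrite the definition by pulling the factor $k$ inside and interchanging the summation order, obtaining
\begin{equation}
    \bar{r} \;=\; \frac{1}{I(\bm{X};Y)}\sum_{\alpha\in\mathcal{A}_n} r(\alpha)\,\Pi(\alpha).
\end{equation}
Next I would express the degree of redundancy as a sum of indicators,
\begin{equation}
    r(\alpha) \;=\; \sum_{i=1}^{n}\mathbb{1}\bigl[\{i\}\in\alpha\bigr],
\end{equation}
which follows directly from the definition. Substituting this and interchanging the two sums gives
\begin{equation}
    \bar{r}\;=\;\frac{1}{I(\bm{X};Y)}\sum_{i=1}^{n}\sum_{\alpha\,:\,\{i\}\in\alpha}\Pi(\alpha).
\end{equation}

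The crux of the argument is then to identify each inner sum with $I(X_i;Y)$. For this I would invoke the PID consistency condition (recalled in Appendix~\ref{sec:PID}): for any antichain $\gamma$, the cumulative redundancy satisfies $I_{\cap}(\gamma)=\sum_{\alpha\preceq\gamma}\Pi(\alpha)$, and in particular $I_{\cap}(\{\{i\}\})=I(X_i;Y)$. It then suffices to verify the elementary lattice fact that $\alpha\preceq\{\{i\}\}$ if and only if $\{i\}\in\alpha$: the ``if'' direction is immediate from the lattice order, while the ``only if'' direction uses that any $A\in\alpha$ with $A\subseteq\{i\}$ must equal $\{i\}$ since atoms of the lattice consist of nonempty subsets. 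Therefore
\begin{equation}
    \sum_{\alpha\,:\,\{i\}\in\alpha}\Pi(\alpha) \;=\; I(X_i;Y),
\end{equation}
and substituting yields the stated formula. Shannon-invariance is then a free consequence, since the right-hand side is expressible purely through Shannon entropies (via $I(X_i;Y)=H(X_i)+H(Y)-H(X_i,Y)$ and the analogous expansion of $I(\bm{X};Y)$), and any two PIDs agreeing with Shannon quantities must yield the same value.

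The main obstacle I anticipate is not computational but conceptual clarity: one must be careful to justify the single-variable collapse $\{\alpha : \alpha \preceq \{\{i\}\}\} = \{\alpha : \{i\}\in\alpha\}$, because this is precisely the place where the PID structure (the redundancy lattice and its consistency equation) enters. Once this identification is made, the remainder of the argument is a clean interchange of finite sums.
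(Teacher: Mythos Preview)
Your proof is correct and follows essentially the same approach as the paper: both rest on the observation that $I(X_i;Y)$ decomposes as the sum of atoms $\Pi(\alpha)$ with $\{i\}\in\alpha$, so that summing over $i$ counts each atom exactly $r(\alpha)$ times. The paper argues this verbally (an atom appears in $I(X_i;Y)$ precisely for those $i$ with $\{i\}\in\alpha$, hence $r(\alpha)$ times in $R(\bm X;Y)$), whereas you make the same counting explicit via the indicator decomposition $r(\alpha)=\sum_i\mathbb{1}[\{i\}\in\alpha]$ and the lattice identification $\{\alpha:\alpha\preceq\{\{i\}\}\}=\{\alpha:\{i\}\in\alpha\}$; the underlying argument is identical.
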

\begin{proof}
Let us first consider the quantity $R(\bm X;Y) := \sum_{i=1}^n I(X_i:Y)$. All the atoms accounted for by $R(\bm X;Y)$ must be accessible from at least one source, and hence corresponding to the ones with $r(\alpha)\geq 1$. Moreover, atoms that appear in multiple mutual information terms are counted multiple times, and the number of mutual information terms an atom appears in is exactly its degree of redundancy $r(\alpha)$. Therefore, one can re-write $R(\bm X;Y)$ as 
\begin{equation}
R(\bm X;Y) = \sum_{k=1}^n k \sum_{r(\alpha)=k} \Pi(\alpha). %
\end{equation}
The desired result then follows by normalising this by $I(\bm X;Y)$.
\end{proof}

Note that while $r(\alpha)$ discriminates between `lower-order' and `higher-order' redundancies --- corresponding to values $r(\alpha)=2,3,\ldots,n$ --- it does not draw such distinctions in the synergy component which corresponds solely to the case $r(\alpha)=0$.  Hence, $r$ looks at redundancies through a finer lens compared to synergies.

\subsection{Degree of vulnerability}

When pondering on the information atoms for $n=2$ source variables, one can note that $\Pi(\{1\}\{2\})$ is the only information that cannot be disrupted by the failure of one source (e.g., if one source suddenly becomes unavailable). Following this intuition, one can describe the \textit{degree of vulnerability} of an atom by identifying the number of sources that are strictly necessary to obtain the information atom --- conversely, whose failure would render it inaccessible. For example, the degree of vulnerability of $\Pi(\{1,2\}\{1,3\})$ is one, as there is exactly one source which is strictly necessary to obtain this atom, namely  $X_1$. In contrast, the degree of vulnerability of $\Pi(\{1,2\}\{1,3\}\{2,3\})$ is zero because no source is strictly necessary to obtain it. We formalise this notion in the next definition, which captures the number of indices contained in all sets within $\alpha$ associated with atom $\Pi(\alpha)$. 

\begin{definition}
For a given information atom $\Pi(\alpha)$, its \emph{degree of vulnerability} $v(\alpha)$ is given by
\begin{equation}
v(\alpha) := \big|\{i \in [n] \: | \: \forall \mathbf{a} \in \alpha: i \in \mathbf{a}\}\big|.
\end{equation}    
where $[n] = \{1,\ldots,n\}$.
\end{definition}

The name `degree of vulnerability' reflects the fact that $\Pi(\alpha)$ captures the susceptibility of $\Pi(\alpha)$ to information loss when individual sources become unavailable. Effectively, the more sources $\Pi(\alpha)$ critically depends on, the more vulnerable it is. The information atoms with $v(\alpha)=0$ can be called \textit{robust} since they do not depend on any particular source. The classification of atoms for $n=3$ sources based on their vulnerability  is illustrated in Figure~\ref{fig:illustration_r_v}.

As with the degree of redundancy, one can define the \emph{average degree of vulnerability} as
\begin{equation}
    \bar{v} := \sum_{j=0}^n j \sum_{v(\alpha)=j} \frac{\Pi(\alpha)}{I(\bm X;Y)}. 
\end{equation} 
It can be shown that $\bar{v}\in[0,n]$, with $\bar{v}\ll 1$ characterising robust systems and  large values of $\bar{v}$ characterising highly vulnerable ones (see Appendix~\ref{app:bounds}). 
Our next result shows that $\bar{v}$ is a Shannon-invariant property of the joint distribution of the source and target variables.

\begin{proposition}\label{prop:v_bar_shannon_invariant}
The average degree of vulnerability is a Shannon-invariant of the distribution $p_{\bm X,Y}$, which can be calculated as
\begin{equation}
    \bar{v} = \frac{\sum_{j=1}^n I(X_j;Y|\bm X_{-j})}{I(\bm X;Y)}.
\end{equation} 
where $\bm X_{-j}$ denotes the collection of all sources except the j-th source.
\end{proposition}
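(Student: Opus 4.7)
The plan is to mirror the proof of Proposition~\ref{lemma:R}: introduce the auxiliary quantity $V(\bm X;Y) := \sum_{j=1}^n I(X_j;Y|\bm X_{-j})$, re-express it in terms of atoms $\Pi(\alpha)$ by counting the multiplicity with which each atom appears across the sum, and verify that this multiplicity equals $v(\alpha)$. The result will then follow by dividing by $I(\bm X;Y)$.

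The key step is the atom-level analysis of each term $I(X_j;Y|\bm X_{-j})$. I would first rewrite this as $I(X_j;Y|\bm X_{-j}) = I(\bm X;Y) - I(\bm X_{-j};Y)$ using the chain rule. The standard PID interpretation (see Appendix~\ref{sec:PID}) tells us that for any subset of sources $\bm T$ indexed by $T\subseteq[n]$, the mutual information $I(\bm T;Y)$ is the sum of those atoms $\Pi(\alpha)$ for which at least one set $\mathbf{a}\in\alpha$ satisfies $\mathbf{a}\subseteq T$; intuitively, $\Pi(\alpha)$ is accessible from $\bm T$ iff some access route in $\alpha$ is already contained in $T$. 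Applying this with $T=[n]\setminus\{j\}$, an atom $\Pi(\alpha)$ is contained in $I(\bm X;Y)$ but absent from $I(\bm X_{-j};Y)$ precisely when every $\mathbf{a}\in\alpha$ fails to be a subset of $[n]\setminus\{j\}$, i.e., when $j\in\mathbf{a}$ for all $\mathbf{a}\in\alpha$. Hence
\begin{equation}
I(X_j;Y|\bm X_{-j}) = \sum_{\alpha:\, j\in\mathbf{a}\,\forall\mathbf{a}\in\alpha}\Pi(\alpha).
\end{equation}

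Summing over $j$, each atom $\Pi(\alpha)$ is counted once for every index $j\in[n]$ such that $j\in\mathbf{a}$ for all $\mathbf{a}\in\alpha$, and by the definition of $v$ the number of such indices is exactly $v(\alpha)$. Thus
\begin{equation}
V(\bm X;Y) = \sum_{j=0}^n j \sum_{v(\alpha)=j}\Pi(\alpha),
\end{equation}
and dividing by $I(\bm X;Y)$ gives $\bar v$, establishing both Shannon-invariance (since the left-hand side is built from classical Shannon quantities) and the claimed formula.

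The main obstacle is justifying the characterisation of which atoms contribute to $I(\bm X_{-j};Y)$. Unlike the proof of Proposition~\ref{lemma:R}, which only uses that each marginal mutual information $I(X_i;Y)$ contains the atoms accessible through source $i$, here I need to invoke the stronger structural property that for any subset of sources $\bm T$, the information $I(\bm T;Y)$ decomposes as the sum over atoms $\Pi(\alpha)$ having at least one $\mathbf{a}\in\alpha$ with $\mathbf{a}\subseteq T$. This is a fundamental property of PID that is spelled out in the appendix; once cited, the combinatorial counting is straightforward.
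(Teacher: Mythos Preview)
Your proposal is correct and follows essentially the same route as the paper: define $V(\bm X;Y)=\sum_j I(X_j;Y|\bm X_{-j})$, show each atom $\Pi(\alpha)$ is counted with multiplicity $v(\alpha)$, and normalise. The only cosmetic difference is that you make the counting explicit via the chain rule and the consistency equation~\eqref{eq:consistency_equation} (equivalently~\eqref{eq:consistency_equation_cmi} with $\mathbf{a}=[n]\setminus\{j\}$), whereas the paper argues this point more verbally; your concern about the ``obstacle'' is therefore already handled by the appendix you cite.
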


\begin{proof}
Consider the quantity $V(\bm X;Y) := \sum_{j=1}^n I(X_j;Y|\bm X_{-j})$. The atoms accounted for by $V(\bm X;Y)$ are exactly the ones with $v(\alpha)\geq 1$. This is because the conditional mutual information $I(X_j;Y|\bm X_{-j})$ is the information we obtain from $X_j$ even if we already know all other sources. But then all such information must critically depend on $X_j$ and hence has at least $v(\alpha)=1$.  Generally, if an atom has $v(\alpha)=j$, i.e. it critically depends on $j$ sources, then it will appear in $j$ conditional mutual information terms and is hence counted $j$ times in $V(\bm X;Y)$.  This includes the special case of $v(\alpha)=0$: if an atom does not critically depend on any information source, then it does not appear in any CMI. If it did, it \textit{would} critically depend on some source since knowing all other sources was apparently not enough to obtain it. This implies that
\begin{equation}
    V(\bm X;Y) = \sum_{j=0}^n j \sum_{v(\alpha)=j} \Pi(\alpha).
\end{equation}
Finally, the desired result then follows by normalising this by $I(\bm X;Y)$.
\end{proof}

Note that $v(\alpha)$ distinguishes between lower-order and higher-order vulnerabilities --- corresponding to values $v(\alpha) = 1, 2,\ldots, n$ --- whereas the robustness is included without further distinction in the case $v(\alpha)=0$. Thus, $v$ looks at vulnerability through a finer lens compared to robustness.

\subsection{Different views on redundancies and synergies}
\label{sec:different_views}
The degrees of redundancy and vulnerability introduced above reveal two distinct ways to think about redundancy and synergy:

\begin{itemize}
    \item \textit{Source-level redundancy and synergy}: $r$ distinguishes atoms in terms of how many individual sources can provide a given information atom. Thus, $r>1$ takes place when the information can be provided redundantly by more than one single source, and $r=0$ when it cannot and is hence synergistic with respect to the sources. 
    \item \textit{Robustness and vulnerability}: $v$ distinguishes information atoms based on how easy or hard they are to disrupt. Robustness ($v = 0$) reflects that the information is redundant in a way that ensures it remains accessible even if any individual source is lost. However, this form of redundancy does not necessarily mean that the information is accessible through multiple \textit{individual} sources (e.g., $\Pi(\{1,2\}\{1,3\}\{2,3\})$ is robust). On the other hand, vulnerability (for $v > 1$) is a form of synergy, as an atom that critically depends on more than one source cannot be obtained from any individual source alone.

\end{itemize}
Interestingly, there are logical relationships between the two different notions of redundancy and synergy. 
\begin{itemize}
    \item[a)] Source-level redundancy is stronger than robustness, i.e. the information atoms with $r(\alpha)>1$ are a subset of the atoms with $v(\alpha)=0$. 
    \item[b)] Vulnerability (of degree larger one) is stronger than source-level synergy, i.e. the atoms with $v(\alpha)>1$ are a subset of atoms with $r(\alpha)=0$. 
\end{itemize}

For quantifying the total information at a given degree of vulnerability, we define the \textit{$k$-vulnerable information} as
\begin{equation}
    I_\text{v}^{(k)}(\bm X;Y) := \sum_{v(\alpha)=k} \Pi(\alpha),
\end{equation}
representing the total information that critically depends on exactly $k$ sources. For $k=0$, this is the total \textit{robust information}. Similarly, we define the \textit{$k$-redundant information} as
\begin{equation}
    I_\text{r}^{(k)}(\bm X;Y) := \sum_{r(\alpha)=k} \Pi(\alpha),
\end{equation}
representing the total information accessible through exactly $k$ sources. For $k=0$, this is the \textit{source-level synergy}.

Assuming non-negative atoms, the following inequalities are implied by the logical relationships between the concepts:
\begin{align}
    &I_\text{v}^{(0)}(\bm X;Y) \geq \sum_{k=2}^n I_\text{r}^{(k)}(\bm X;Y) \\
    &I_\text{r}^{(0)}(\bm X;Y) \geq \sum_{k=2}^n I_\text{v}^{(k)}(\bm X;Y).
\end{align}
The relationships between the different notions of redundancy and synergy are summarized in Table~\ref{tab:weak_strong_red_syn}. The two notions coincide only for the case of $n=2$ source variables, with source redundancy and robustness corresponding to the atom $\Pi(\{1\}\{2\})$, and source synergy and vulnerability of degree larger one corresponding to the atom $\Pi(\{1,2\})$.

\begin{table}[h]
    \centering
    \caption{Weaker and stronger notions of redundancy and synergy.}
    \begin{tabular}{c||c|c}
          & Redundancy & Synergy \\
         \hline\hline
         Weaker ($k=0$) & $I_\text{v}^{(0)}(\bm X;Y)$ & $I_\text{r}^{(0)}(\bm X;Y)$ \\
         & (robustness) & (source-level syn.) \\
         \hline
         Stronger ($k>1$) & $I_\text{r}^{(k>1)}(\bm X;Y)$ & $I_\text{v}^{(k>1)}(\bm X;Y)$ \\
         & (source-level red.) & (vulnerability)
    \end{tabular}
    \label{tab:weak_strong_red_syn}
\end{table}

Before concluding, it is worth noting a subtle asymmetry in the interpretation of $r(\alpha)$ and $v(\alpha)$, which is rooted in how the words `redundancy' and `vulnerability' are used in ordinary language. For redundancy, only $r>1$ corresponds to genuine redundancy (for example, if your network administrator reassured you, “Don’t worry about your data; I’ve stored it redundantly,” you would expect it to be backed up on \textit{more} than one independent hard drive). For vulnerability, however, $v=1$ already signals vulnerability (e.g. if the structural integrity of a house depends on a single critical screw—such a house is undoubtedly vulnerable; there is no need for two critical screws to classify it as such). 

\section{Comparing redundancy and synergy}

Here we highlight theoretical advances related to the framework of Shannon invariants. In particular, we show how the degree of redundancy and vulnerability let us interpret existent high-order metrics and suggest new ones.

\subsection{Understanding the Redundancy-Synergy index}

We start by presenting how the analysis related to the degree of redundancy leads to a rigorous interpretation of the \emph{redundancy-synergy index} (RSI)~\cite{Chechik2002_group}, which is one of the most widely considered metrics to assess the balance between redundant and synergistic information~\cite{ timme2014synergy,brenner2000synergy,gawne1993independent,griffith2014quantifying, yang2004feature, mares2022selection,mares2023combined,luecke2024dynamical}. The RSI is defined as
\begin{equation} \label{eq:definition_rsi}
\text{RSI}(\bm X;Y) := \sum\limits_{i=1}^n I(X_i;Y) - I(\bm{X};Y).
\end{equation}
Its underlying intuition is that if the information is predominantly redundant, 
the sum of mutual information contributions $I(X_i;Y)$ should be much larger than the joint mutual information $I(\bm{X};Y)$, whereas the latter should be larger than the former if the information is predominantly synergistic. Thus, positive values of the RSI are interpreted as indicating a dominance of redundancy, and negative values indicating a dominance of synergy. 

The case of $n=2$ predictors had already been used earlier as an indicator of synergy vs redundancy by \cite{brenner2000synergy} and even earlier by \cite{gawne1993independent}, and the latter considered the ratio of the sum over the total mutual information (i.e. $\bar{r}$) for $n=2$.

The intuitions developed in those early works can be neatly formalised via PID. In effect, by replacing each mutual information by its corresponding PID atoms, a direct calculation shows that
\begin{align}\label{eq:rsi_n_2}
\text{RSI}(X_1,X_2;Y)  
= \Pi(\{1\}\{2\}) - \Pi(\{1,2\}),
\end{align}
confirming that the RSI for $n=2$ predictors captures the difference between redundancy and synergy. %
From this analysis, it is natural to wonder whether this clear-cut picture generalises for more source variables. Does the RSI for $n>2$ still represent the difference between redundancies minus synergies of some sort? 

Our next result answers this question by providing a general decomposition of the RSI for any number of source variables using the notions of source-level redundancy and source-level  synergy (see Table~\ref{tab:weak_strong_red_syn}).
\begin{proposition} \label{prop:rsi}
\begin{align} \label{eq:rsi_k_0_to_n}
\normalfont{\text{RSI}}(\bm X;Y) 
&= \sum_{k=2}^n (k-1) I_{\normalfont{\text{r}}}^{(k)}(\bm X;Y) %
- I_{\normalfont{\text{r}}}^{(0)}(\bm X;Y).
\end{align}
\end{proposition}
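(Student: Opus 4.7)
The plan is to combine the proof technique of Proposition~\ref{lemma:R} with the basic PID sum rule \eqref{eq:PID}, exploiting the fact that the $k$-redundant information $I_\text{r}^{(k)}(\bm X;Y)$ partitions all atoms by their degree of redundancy $r(\alpha)$.

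First, I would rewrite the two terms in the definition \eqref{eq:definition_rsi} of the RSI directly in terms of the partition $\{I_\text{r}^{(k)}(\bm X;Y)\}_{k=0}^n$. For the sum of marginal mutual informations, I would reuse the counting argument from the proof of Proposition~\ref{lemma:R}: each atom $\Pi(\alpha)$ appears in $I(X_i;Y)$ once for every singleton $\{i\}\in\alpha$, so it is counted exactly $r(\alpha)$ times in $\sum_{i=1}^n I(X_i;Y)$, giving
\begin{equation}
    \sum_{i=1}^n I(X_i;Y) \;=\; \sum_{k=1}^n k\, I_\text{r}^{(k)}(\bm X;Y).
\end{equation}
For the joint mutual information, the PID sum rule \eqref{eq:PID} together with the partition of atoms by $r(\alpha)$ yields
\begin{equation}
    I(\bm X;Y) \;=\; \sum_{k=0}^n I_\text{r}^{(k)}(\bm X;Y).
\end{equation}

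Then I would simply subtract the second identity from the first and collect like terms:
\begin{equation}
    \text{RSI}(\bm X;Y) \;=\; \sum_{k=1}^n (k-1)\, I_\text{r}^{(k)}(\bm X;Y) \;-\; I_\text{r}^{(0)}(\bm X;Y).
\end{equation}
The coefficient for $k=1$ vanishes, so the sum can be started at $k=2$, producing the stated formula. Because the identity in Proposition~\ref{lemma:R} is a genuine identity at the level of atoms (not just after normalisation), this algebra goes through for any PID measure.

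There is no real obstacle here; the only thing to be careful about is the bookkeeping at the boundary indices. The $k=0$ term contributes only through $I(\bm X;Y)$ (since source-level synergistic atoms provide no marginal information), which is why it appears with a negative sign and no prefactor, and the $k=1$ term cancels cleanly between the two sums, which is what allows the final expression to have the clean interpretation ``RSI equals the excess of source-level redundancy, weighted by $(k-1)$, minus the source-level synergy.''
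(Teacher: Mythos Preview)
Your proposal is correct and follows essentially the same route as the paper: both arguments expand $\sum_i I(X_i;Y)$ via the counting from Proposition~\ref{lemma:R}, expand $I(\bm X;Y)$ as $\sum_{k=0}^n I_\text{r}^{(k)}$, subtract, and observe that the $k=1$ term drops out. The only cosmetic difference is that the paper writes $\sum_{r(\alpha)=k}\Pi(\alpha)$ explicitly rather than using the shorthand $I_\text{r}^{(k)}$.
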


\begin{proof}
By combining Proposition~\ref{lemma:R} with the fact that
\begin{align}
    I(\bm X;Y) = \sum_{\alpha\in\mathcal{A}_n}\Pi(\alpha)
    = \sum_{k=0}^n \sum_{r(\alpha)=k} \Pi(\alpha),
\end{align}
one finds that
\begin{align} \label{eq:rsi_k_0_to_n}
\text{RSI}(\bm X;Y) &= 
R(\bm X;Y) - I(\bm X;Y) \nonumber\\
&=\sum_{k=0}^n (k-1) \sum_{r(\alpha)=k} \Pi(\alpha)\nonumber\\
&= \sum_{k=2}^n (k-1) \sum_{r(\alpha)=k} \Pi(\alpha) - \sum_{r(\alpha)=0} \Pi(\alpha), \nonumber
\end{align}    
which gives the desired result.
\end{proof}

This result shows that the RSI does capture \textit{source-level redundancy} minus \textit{source-level synergy} for any number of source variables. Hence, assuming non-negativity of the information atoms (a property satisfied by some but not all PID approaches), $\text{RSI}>0$ implies there must be at least some source-level redundant information, and $\text{RSI}<0$ that there must be some source-level synergistic information. 

That being said, it is important to notice that if $n\geq 3$ then there is an asymmetry in how redundancy and synergy are treated: the redundant component involves a weighting, where higher-order redundancies are counted more often than lower-order ones. Specifically, if an information atom is accessible via $k$ individual sources then it is counted $k-1$ times, while no such weighting appears in the synergy component. For example, when considering $n=3$ predictors, then  Eq.~\eqref{eq:rsi_k_0_to_n} states that the RSI can be decomposed as
\begin{align} \nonumber
\text{RSI}(X_1,X_2,X_3:Y) = \: &2\Pi(\{1\}\{2\}\{3\}) + \Pi(\{1\}\{2\}) \\ \nonumber
&+ \Pi(\{1\}\{3\}) + \Pi(\{2\}\{3\}) \\ 
&- I_{\normalfont{\text{r}}}^{(0)}(X_1,X_2,X_3:Y) .
\end{align}

This asymmetric weighting could give the impression that a rigorous interpretation of the values of the RSI is not straightforward. However, a natural interpretation can be built in terms of the degree of redundancy, as shown by our next corollary.
\begin{corollary}
    The RSI and the average degree of redundancy $\bar{r}$ are related via
    \begin{equation}
        \normalfont{\text{RSI}}(\bm X;Y) = (\bar{r} - 1) I(\bm X;Y).
    \end{equation}
\end{corollary}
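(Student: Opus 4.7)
The plan is to read the corollary as a one-line algebraic rearrangement of Proposition~\ref{lemma:R}, using only the definition of the RSI from Eq.~\eqref{eq:definition_rsi}. There is no structural obstacle here: the content of the statement is entirely captured by the identity $\sum_i I(X_i;Y) = \bar{r}\, I(\bm X;Y)$ derived earlier.

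Concretely, first I would multiply both sides of the formula in Proposition~\ref{lemma:R} by $I(\bm X;Y)$ (which is permitted whenever $\bar{r}$ is defined, i.e.\ when $I(\bm X;Y) \neq 0$) to obtain
\begin{equation}
\sum_{i=1}^n I(X_i;Y) = \bar{r}\, I(\bm X;Y).
\end{equation}
Then I would substitute this into the definition of the RSI,
\begin{equation}
\text{RSI}(\bm X;Y) = \sum_{i=1}^n I(X_i;Y) - I(\bm X;Y),
\end{equation}
and factor $I(\bm X;Y)$ to arrive at $\text{RSI}(\bm X;Y) = (\bar{r}-1) I(\bm X;Y)$.

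The only subtlety worth a sentence in the proof is the degenerate case $I(\bm X;Y) = 0$, for which $\bar{r}$ is not defined but the RSI vanishes trivially (since then each $I(X_i;Y) = 0$ as well, by non-negativity of mutual information and the data-processing-style bound $I(X_i;Y) \leq I(\bm X;Y)$). This boundary case can be dispatched in one line, so the main proof is essentially a single substitution. I do not anticipate any real obstacle.
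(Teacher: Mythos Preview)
Your proposal is correct and matches the paper's implicit argument: the corollary is stated without proof in the paper, being an immediate consequence of Proposition~\ref{lemma:R} together with the definition of the RSI in Eq.~\eqref{eq:definition_rsi}, exactly as you derive it. Your added remark on the degenerate case $I(\bm X;Y)=0$ is a nice touch that the paper handles only by noting earlier that $\bar{r}$ is ill-defined in that situation.
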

This result shows that $\text{RSI}<0$ if and only if $\bar{r}<1$,  implying that a nonzero amount of source-level synergy must be present (assuming non-negative atoms). In contrast, $\text{RSI}>0$ takes place if and only if $\bar{r}>1$, implying that some information must be source-level redundant, meaning it can be retrieved from more than one source (again assuming non-negative atoms). Finally, while $\bar{r}$ is an entirely relative measure, the RSI accounts for the absolute amount of information contained in the sources about the target by multiplying $(\bar{r}-1)$ by the total mutual information $I(\bm X;Y)$, thereby providing an `effect size' to quantify the magnitude of these effects.

\subsection{Dual Redundancy-Synergy Index}
The previous section presented how the RSI compares redundancy and synergy at the source-level. Interestingly, building upon the view put forward in Section~\ref{sec:different_views}, one may also consider such a comparison in terms of robustness and vulnerability. %
This suggests the introduction of a \emph{dual redundancy-synergy index} (DRSI), which --- constructed in analogy to Proposition \ref{prop:rsi} --- can be defined as
\begin{align} \label{eq:drsi_k_0_to_n}
\normalfont{\text{DRSI}}(\bm X;Y) 
&:=  I_{\normalfont{\text{v}}}^{(0)}(\bm X;Y) - \sum_{k=2}^n (k-1) I_{\normalfont{\text{v}}}^{(k)}(\bm X;Y) .
\end{align}
Above, we have replaced the redundancy components with robustness, and the source-level synergies with vulnerabilities of degree larger than one. Hence, $\text{DRSI}>0$ implies (assuming non-negative atoms) there must be at least some robust information, and $\text{DRSI}<0$ that there must be some vulnerable information of degree larger one. 

One important consequence of the definition of the DRSI is that now the synergistic component involves a weighting where higher-order vulnerabilities are counted more often than lower-order ones. In particular, if an atom has $v(\alpha)=k$ it is counted $k-1$ times. The redundant component by contrast involves no such weighting in the DRSI. For example, when considering $n=3$ sources then  Eq.~\eqref{eq:drsi_k_0_to_n} becomes
\begin{align} \nonumber
\text{DRSI}(X_1,X_2,X_3:Y) = &I_{\normalfont{\text{v}}}^{(0)}(X_1,X_2,X_3:Y) - \\ \nonumber
&2 \Pi(\{\{1,2,3\}\}) - \Pi(\{\{1,2\}\}) \\ 
& - \Pi(\{\{1,3\}\}) - \Pi(\{\{2,3\}\}).
\end{align}

Our next result shows that, just as the RSI, the DRSI is a Shannon invariant of the distribution $p_{\bm X, Y}$.

\begin{proposition}
The DRSI is a Shannon invariant, and can be expressed as
\begin{equation}
\normalfont{\text{DRSI}}(\bm X; Y) =   I(\bm X; Y) - \sum_{j=1}^n I(X_j;Y|\bm X_{-j}) .
\end{equation}
where $\bm X_{-j}$ denotes the collection of all sources except the j-th source.
\end{proposition}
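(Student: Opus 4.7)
The plan is to piggyback on the atom-level expansion already established in the proof of Proposition~\ref{prop:v_bar_shannon_invariant}, reducing the proof to a short coefficient-matching exercise. That proof showed
\begin{equation}
V(\bm X;Y) := \sum_{j=1}^n I(X_j;Y|\bm X_{-j}) = \sum_{k=0}^n k \, I_{\text{v}}^{(k)}(\bm X;Y),
\end{equation}
so by combining this with the trivial partition of the total mutual information by degree of vulnerability,
\begin{equation}
I(\bm X;Y) = \sum_{\alpha \in \mathcal{A}_n} \Pi(\alpha) = \sum_{k=0}^n I_{\text{v}}^{(k)}(\bm X;Y),
\end{equation}
I can rewrite $I(\bm X;Y) - V(\bm X;Y)$ entirely in terms of the quantities $I_{\text{v}}^{(k)}(\bm X;Y)$.

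The key calculation is then to form
\begin{equation}
I(\bm X;Y) - V(\bm X;Y) = \sum_{k=0}^n (1-k)\, I_{\text{v}}^{(k)}(\bm X;Y),
\end{equation}
and read off the coefficients by degree: the $k=0$ term yields $+I_{\text{v}}^{(0)}(\bm X;Y)$, the $k=1$ term drops out because its coefficient $(1-1)$ vanishes, and each $k \ge 2$ term contributes $-(k-1)\,I_{\text{v}}^{(k)}(\bm X;Y)$. Comparing this with the definition of the DRSI in Eq.~\eqref{eq:drsi_k_0_to_n} shows that these two expressions coincide, establishing the announced closed form. Shannon-invariance is then an immediate corollary, since the right-hand side is a fixed linear combination of Shannon mutual informations and conditional mutual informations, which depend only on the entropies of subsets of $(X_1,\dots,X_n,Y)$.

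There is no substantive obstacle here — the argument is essentially bookkeeping built on top of Proposition~\ref{prop:v_bar_shannon_invariant}. The only subtlety worth flagging is the vanishing of the $k=1$ contribution, which is what makes the $\sum_{k=2}^n (k-1)$ weighting in the definition of DRSI fit perfectly with the linear coefficients coming from $V(\bm X;Y)$; this cancellation is precisely why the definition of DRSI in Eq.~\eqref{eq:drsi_k_0_to_n} is dual to the identity $\text{RSI} = R - I(\bm X;Y)$ used in Proposition~\ref{prop:rsi}, and without it one would be left with a residual $I_{\text{v}}^{(1)}$ term obstructing the clean Shannon-invariant expression.
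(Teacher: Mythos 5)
Your proposal is correct and follows essentially the same route as the paper: both arguments reduce to the coefficient identity $I(\bm X;Y) - V(\bm X;Y) = \sum_{k=0}^n (1-k)\, I_{\text{v}}^{(k)}(\bm X;Y)$, using the atom-level expansion $V(\bm X;Y) = \sum_k k\, I_{\text{v}}^{(k)}(\bm X;Y)$ from the proof of Proposition~\ref{prop:v_bar_shannon_invariant} together with the partition of $I(\bm X;Y)$ by degree of vulnerability. The paper merely phrases the same bookkeeping as ``adding all atoms with $v(\alpha)\geq 1$ to each term'' of the definition, so there is no substantive difference.
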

\begin{proof}
Starting from the definition of the DRSI we add all atoms with $v(\alpha)\geq 1 $ to each term:
\begin{align}
\normalfont{\text{DRSI}}(\bm X;Y) \nonumber
=&  \left(I_{\normalfont{\text{v}}}^{(0)}(\bm X;Y) + \sum_{k=1}^n I_\text{v}^{k}(\bm X;Y) \right) \\ \nonumber
&- \left(\sum_{k=2}^n (k-1) I_{\normalfont{\text{v}}}^{(k)}(\bm X;Y) + \sum_{k=1}^n I_v^{k}(\bm X;Y)\right) \\ \nonumber
=& I(\bm X;Y) - \sum_{k=1}^n k I_{\normalfont{\text{v}}}^{(k)}(\bm X;Y) \\ \nonumber
=& I(\bm X;Y) - \sum_{j=1}^n I(X_j;Y|\bm X_{-j}). 
\end{align}
Where the second equality follows because the robustness consists of all atoms with $v(\alpha)=0$, and hence, the first term equals the joint mutual information. In the second term the count of each atom is increased by one including those with $v(\alpha)=1$ which were not counted at all before and are now counted once. The third equality follows by the argument given in the proof of Proposition~\ref{prop:v_bar_shannon_invariant} and the definition of $I_\text{v}^{(k)}$. 
\end{proof}
Similar to how the RSI is related to the average degree of redundancy, a natural interpretation of the DRSI can be built by relating it with the average degree of vulnerability in the following way:
\begin{corollary}
    The DRSI and the average degree of vulnerability $\bar{v}$ are related via
    \begin{equation}
        \normalfont{\text{DRSI}}(\bm X;Y) = (1- \bar{v} ) I(\bm X;Y).
    \end{equation}
\end{corollary}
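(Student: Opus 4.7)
The plan is to combine the two Shannon-invariant expressions that have already been established: the closed form for $\bar{v}$ from Proposition~\ref{prop:v_bar_shannon_invariant}, and the closed form for the DRSI from the proposition immediately preceding this corollary. Both express their respective quantities purely in terms of Shannon entropies via the same sum $\sum_{j=1}^n I(X_j;Y|\bm X_{-j})$, so relating them is essentially an algebraic exercise.

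Concretely, I would first multiply the identity
\begin{equation}
    \bar{v} = \frac{\sum_{j=1}^n I(X_j;Y|\bm X_{-j})}{I(\bm X;Y)}
\end{equation}
through by $I(\bm X;Y)$ to obtain
\begin{equation}
    \sum_{j=1}^n I(X_j;Y|\bm X_{-j}) = \bar{v}\, I(\bm X;Y).
\end{equation}
Next I would substitute this into the Shannon-invariant form of the DRSI proved just above, namely
\begin{equation}
\text{DRSI}(\bm X;Y) = I(\bm X;Y) - \sum_{j=1}^n I(X_j;Y|\bm X_{-j}),
\end{equation}
and factor out $I(\bm X;Y)$ to arrive at $\text{DRSI}(\bm X;Y) = (1-\bar{v})\,I(\bm X;Y)$, which is the claimed identity.

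There is essentially no obstacle beyond a well-definedness remark: the identity for $\bar{v}$ assumes $I(\bm X;Y) > 0$, so the result should be read as holding whenever $\bar{v}$ is defined; when $I(\bm X;Y) = 0$, both sides of the corollary reduce trivially to zero (since all atoms vanish and the DRSI expression evaluates to $0$), so the stated equation continues to hold in the natural limiting sense. No appeal to the structure of the atoms or to any specific PID measure is needed, which is consistent with the Shannon-invariant nature of both quantities involved.
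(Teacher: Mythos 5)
Your proposal is correct and matches the paper's (implicit) argument exactly: the corollary follows by substituting the Shannon-invariant expression for $\bar{v}$ from Proposition~\ref{prop:v_bar_shannon_invariant} into the identity $\text{DRSI}(\bm X;Y) = I(\bm X;Y) - \sum_{j=1}^n I(X_j;Y|\bm X_{-j})$ and factoring. Your added remark on the $I(\bm X;Y)=0$ case is a reasonable caveat consistent with the paper's note that $\bar{v}$ is ill-defined there.
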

It follows immediately that $\text{DRSI}(\bm X;Y) < 0$ if and only if $\bar{v} > 1$. 
Analogous to the RSI, multiplying $(\bar{v}-1)$ by the total mutual information $I(\bm X;Y)$ %
provides an effect size that quantifies the magnitude of these effects.

In summary, the RSI and the DRSI are two Shannon invariants that quantify the balance between redundancy and synergy within a set of source variables about a target variable, yet they differ in (i) how redundancy and synergy are interpreted and (ii) which of them is emphasized more strongly. The RSI captures the balance in terms of source-level redundancy and source-level synergy, emphasising redundant contributions by attributing a higher weight to higher-order redundancies. In contrast, the DRSI captures this balance through robustness and vulnerability, emphasising vulnerable contributions by assigning a higher weight to higher-order vulnerabilities. %

\section{Calculating Shannon invariants of deep learning systems}

Here we show how the concepts of the degree of redundancy and vulnerability can be employed to analyse deep learning systems, shedding new light into how feed-forward classifiers and convolutional autoencoders process information among their neural units.

\begin{figure*}
    \centering
    \includegraphics[width=\textwidth]{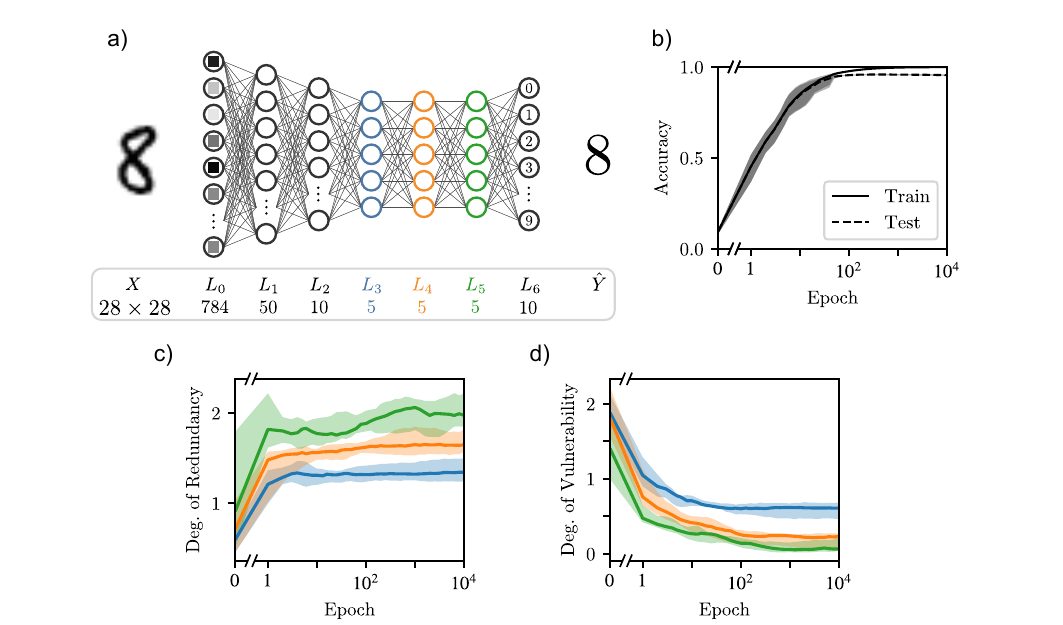}
    \caption{\textbf{In a deep classification network, the degree of redundancy of the layer's activity about the label increases throughout the hidden layers wile the degree of vulnerability decreases throughout the hidden layers and over training.} a) The architecture of the MNIST classification network comprises five fully-connected hidden layers with activation values quantized to eight levels, three of which have the same size of $5$ neurons. b) After $10^4$ training epochs, the classifier reaches an average train set accuracy of $99.89(3)\%$ and a test set accuracy of $95.5(3)\%$. In all plots, lines represent the median, shaded regions the maximum and minimum of $10$ runs with random weight initializations. c), d) Degree of redundancy and degree of vulnerability computed on the training set for the three equal-sized hidden layers.}
    \label{fig:mnist}
\end{figure*}

\subsection{Background and estimation}
The layers of feed-forward neural networks can be interpreted as successive `channels' that re-encode relevant information to make it extractable for the output layer %
while removing task-irrelevant elements from the inputs. Following this line of reasoning, one can apply information decomposition to the network's activation patterns in order to obtain insights on how information about the target --- e.g., the classification label or the input of an autoencoder --- is encoded by the neural units, how this encoding changes among successive hidden layers, and how it develops throughout training.

Despite its promise, early attempts of using information theory in the study of neural networks have revealed potential pitfalls in this approach. Since classical neural networks work with continuous activation values and are deterministic, they define an %
injective map from the inputs to their hidden layer activations. Neural networks thus have (at least in theory) an infinite channel capacity, precluding standard information-theoretic analyses~\cite{saxe_information_2019, goldfeld_estimating_2019, geiger_information_2021}. Attempts to measure information-theoretic quantities in such settings by binning values or employing continuous mutual information estimators thus do not yield meaningful information-theoretic indicators, but can at most be reinterpreted as measures for geometric clustering depending on the estimation scheme \cite{geiger_information_2021}. These shortcomings can be remedied by instead analysing settings that break the network's injectivity, for example by adding Gaussian noise to the activations \cite{goldfeld_estimating_2019}, or by quantising the activations while training and evaluating~\cite{ehrlich2023a}.

The above considerations are concerned with the well-definedness of information theoretic quantities in neural network analysis. After resolving these theoretical issues, we still face the practical challenge of estimation from finite data samples. Here, it is crucial to distinguish clearly between training and test contexts. The test dataset is typically viewed as a finite sample drawn from a potentially infinite underlying data-generating process, making it necessary to statistically estimate information-theoretic quantities. In contrast, our approach is to carry out information-theoretic analyses exclusively on the training set, treating it as the complete population with each element assigned equal probability. This allows us to directly \textit{calculate} information-theoretic quantities, bypassing the need for estimation. The scientific question we address with this approach shifts from how the network encodes an unknown, out-of-sample distribution to how it learns to encode this particular training dataset.

The following subsections present analyses of two different deep neural networks: a feed-forward image classifier and a variational
autoencoder. Building on the above considerations, we consider those networks with activation values being stochastically discretised to eight levels. From these discretised activations, we construct the full probability mass function of activations invoked by the training set, from which the corresponding information-theoretic quantities are computed. Details of these procedures are presented in Appendix~\ref{app:NNets_details}.

\subsection{Feedforward MNIST Classification}

For our first case study, we trained a small fully-connected quantised feed-forward network to solve the classic \emph{MNIST} handwritten digit classification task~\cite{lecun_gradient-based_1998}. The architecture has been chosen with three hidden layers of the same size to allow an analysis of the information-theoretic measures throughout the layers and over training (see \autoref{fig:mnist}.a, cf.~\cite{ehrlich2023a}). The networks have been trained for $10^4$ epochs, reaching a test set accuracy of $95.5(3)\,\%$ (see \autoref{fig:mnist}.b).

Our results show that the degree of redundancy of the neuron's activations with respect to the true classification label increases sharply during the first epoch of training, and then mostly stabilizes with only small increases afterwards in layers $L_3$ and $L_4$. Interestingly, layer $L_5$ exhibits the same initial increase, but displays a continued increase during later stages of training. 

In contrast, the degree of vulnerability decreases over training for all layers, and is also smaller for later layers. 
Please note that there is no direct mathematical reason for the degree of vulnerability to display the opposite effect than the degree of redundancy, and hence this non-trivial signature should be regarded as a signature of how feed-forward deep learning systems process information.

\begin{figure*}
    \centering
    \includegraphics{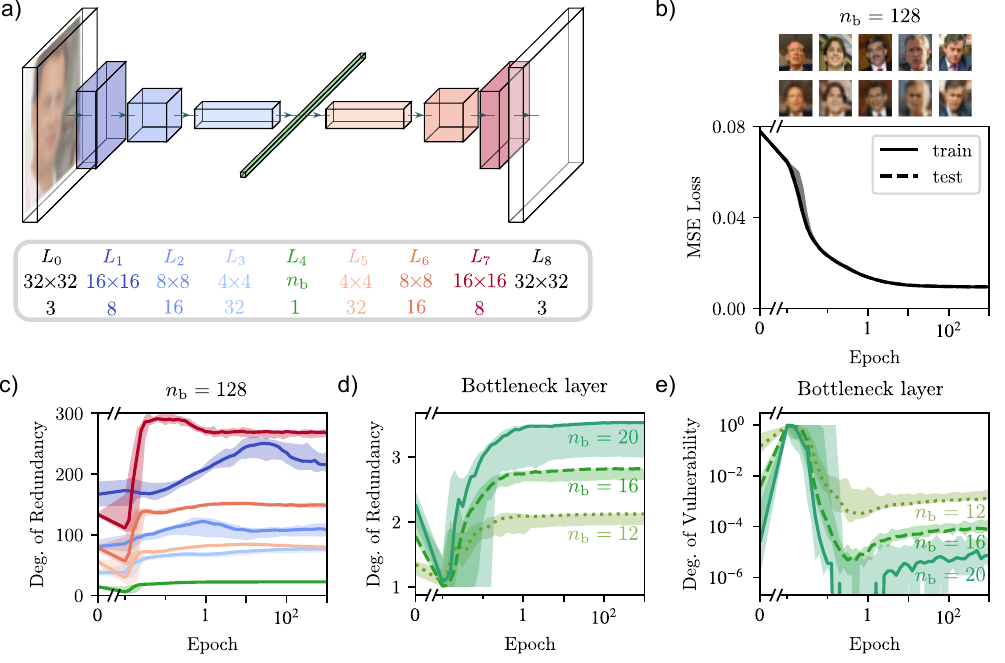}
    \caption{\textbf{In a convolutional autoencoder, the degree of redundancy about the input is larger in decoder layers than in the size-matched encoder layers. Furthermore, the degree of redundancy increases with bottleneck size while the degree of vulnerability decreases.} a) The architecture of the face image autoencoder comprises a three-layer convolutional encoder, a fully-connected bottleneck layers with varying number of neurons $n_\mathrm{b}$ and a three-layer convolutional decoder. Numbers below the layers reflect the size of the activation matrices as well as the number of channels. b) For a bottleneck size of $n_\mathrm{b}=128$, the mean square error loss on the test set converges to $9.5(1)\times 10^{-3}$ after $10^3$ epochs. Five original images (upper row) and their reconstruction (lower row) are shown for the converged networks. In all plots, lines represent the median, shaded regions the maximum and minimum of $10$ runs with random weight initializations. c) Degree of redundancy computed on the training set, all activations from all convolutional filters have been treated as individual source variables. The degree of vulnerability (not shown) is equal to zero up to numerical error for all layers for a bottleneck layer of width $n_\mathrm{b}=128$. d), e) Degree of redundancy and degree of vulnerability of the bottleneck layer for varying small bottleneck sizes $n_\mathrm{b}$. The inset in \textbf{E} shows an enlarged view of the degree of vulnerability.}
    \label{fig:faces}
\end{figure*}

The reason why redundancy increases with depth while vulnerability decreases may be attributable to combinatorial considerations --- because of the data processing inequality, the amount of information about the input can only decrease over layers. This frees more capacity of the network to encode the remaining relevant information more redundantly, which is reflected in both the degree of redundancy and vulnerability. We conjecture that more redundant encodings may be preferable for the network, as they could lead to representations that are more robust to small input variations and thereby support better generalization.

\subsection{Face Autoencoder}
To further showcase the versatility of our framework, we used the degree of redundancy and vulnerability to investigate how a quantized convolutional autoencoder processes information. 
For this purpose, we employed a standard convolutional autoencoder architecture with three convolutional encoding layers, a fully connected bottleneck layer of varying size $n_\mathrm{b}$, and three convolutional decoder layers (see \autoref{fig:faces}.a). This system was trained to reconstitute face images, being trained with the \emph{Labelled Faces in the Wild} dataset~\cite{LFWTech}. For $n_\mathrm{b}=128$, the loss converges after $10^3$ training epochs and achieves a faithful reconstruction of the original figures (see \autoref{fig:faces}.b).

Results for a bottleneck of $n_\mathrm{b}=128$ units show that the degree of redundancy in all layers except the first encoder layer quickly increase within the first training epoch, and remain mostly constant afterwards (see \autoref{fig:faces}.c). After the first epoch, the decoder layers exhibit a higher degree of redundancy than the size-matched encoder layers. As with the successive size-matched layers in the MNIST example, this result may be attributable to the fact that the decoder layers encode less overall information, leaving more capacity for the remaining information to be encoded redundantly.

For significantly smaller bottleneck sizes of $n_\mathrm{b}=12$ to $n_\mathrm{b}=20$ units, the degree of redundancy in the bottleneck layer converges more slowly than in the case of $n_\mathrm{b}=128$ (see \autoref{fig:faces}.d). Furthermore, a clear ordering is shown with smaller bottleneck sizes, resulting in a smaller degree of redundancy. This, again, is likely attributable to the ratio between the relevant information the network tries to learn and the channel capacity of the bottleneck layer.

On the other hand, results show that the degree of vulnerability initially increases sharply to approximately $1$, and then converges to much lower values all within the first training epoch (see \autoref{fig:faces}.e). This sharp increase, which coincides with a drop in the degree of redundancy, can be attributed to an overall small mutual information during this brief training period. In effect, with the total mutual information dropping below one bit for some runs, the remaining information may be provided by single neurons that deviate from being constant for some samples, giving rise to an encoding with a degree of synergy and vulnerability of one. Nevertheless, the small values the degree of vulnerability converge to still show an ordering with the largest bottleneck having the lowest vulnerability in later epochs.

\section{Discussion}

In this paper we introduced a novel approach for studying higher-order information structure using Shannon invariants. This method circumvents operationalisation and scalability issues of information decomposition by focusing on averages that depend solely on Shannon entropy. By addressing these key limitations, this framework opens the door to a wide range of practical data analyses. The results presented here highlight both the theoretical and practical advantages of this approach, as discussed next.

The formalism developed here reveals that redundancy and synergy can be conceptualized in two complementary ways. First, at the source level, redundancy refers to information that is redundantly accessible from more than one individual source, while synergy describes information that cannot be obtained from any single source. Second, redundancy and synergy can be framed in terms of robustness and vulnerability: robustness represents information that remains accessible even if any individual source is lost, implying that it is redundantly contained within the system though not necessarily in multiple individual sources. Vulnerability, in contrast, captures information that critically depends on at least two specific sources, thus making it synergistic but in a stronger sense than source-level synergy. These different conceptualizations highlight the richness of redundancy and synergy in multivariate systems.

Building on these ideas, we introduced two quantities that position a system on a redundancy-synergy scale, reflecting these two distinct ways of conceptualizing these phenomena. The average degree of redundancy captures the average number of sources redundantly providing information, while the average degree of vulnerability represents the average number of sources on which information critically depends. Crucially, both measures are Shannon invariants, relying solely on the Shannon entropies of the system. This eliminates the need to compute individual information atoms or to adopt a specific definition for them. Furthermore, the computational complexity is efficient, as the number of entropies required scales linearly with the number of source variables.

Additionally, the presented approach reveals new insights into the RSI, a well-known information-theoretic metric whose interpretation has remained elusive, and introduces the novel DRSI. We demonstrated that the RSI is closely related to the average degree of redundancy, while the newly introduced DRSI is connected to the average degree of vulnerability, providing a principled approach to interpret their outcomes. Both metrics quantify the balance between redundancy and synergy, with the RSI doing so in terms of source-level redundancy and synergy, and the DRSI in terms of robustness and vulnerability. Importantly, both metrics treat redundancy and synergy asymmetrically: the RSI places greater emphasis on redundancy, while the DRSI prioritizes synergy.

We demonstrated the practical capabilities of the proposed framework by presenting various analyses of deep learning systems, including feed-forward neural networks and convolutional autoencoders. In the feedforward classifier network, we observe that the degree of redundancy increases over training and throughout successive equally-sized hidden layers, while the degree of vulnerability displays the opposite behaviour. In the autoencoder network, the degree of redundancy is linked to the bottleneck size, while the vulnerability again shows the inverse trend. These results suggest a general relationship between the amount of information in a layer and how it is encoded: when little information is encoded in a large layer, the encoding tends to be more redundant, while it becomes more vulnerable the closer to capacity the information transfer becomes.

Overall, the framework introduced in this paper addresses the key limitations of information decomposition and brings important theoretical and practical advantages, opening the way to new information decomposition analyses when a full decomposition is unfeasible. Future work may investigate the landscape of possible Shannon invariants, which may bring new insights into the possible ways of conceptualising synergy and redundancy. Also, future work may develop techniques to improve the estimation of Shannon invariants in large-scale systems, using novel algorithmic approaches~\cite{belloli2025thoi}, de-biasing methods \cite{venkatesh2023gaussian} or null models~\cite{liardi2024null}.

\section*{Acknowledgements}
We thank Andreas Schneider for valuable discussions related to this work. The work of F.R. has been supported by the UK ARIA Safeguarded AI programme and by the PIBBSS Affiliateship programme. A.G., D.E., A.M., and M.W. are employed at the Göttingen Campus Institute for Dynamics of Biological Networks (CIDBN) funded by the Volkswagen Stiftung.

\appendix

\section{Partial Information Decomposition}
\label{sec:PID}
Let us consider a scenario where $n$ random variables $X_1,\dots,X_n$ provide information about a variable $Y$. In addition, let $[n] = \{1,\ldots,n\}$ denote the index set of source variables.
The total information about a \emph{target} variable $Y$ given by the \emph{source} variables $\bm X:=(X_1,\dots,X_n)$ can be quantified by Shannon's mutual information \cite{cover1999elements}, 
which describes the average reduction of uncertainty regarding the value of $Y$ upon observing $\bm X$:
\begin{equation}
I(X_1,\ldots,X_n:Y) = H(Y) - H(Y|\bm{X}).
\end{equation}
Above, the uncertainty about $Y$ is quantified by the Shannon entropy $H(Y)$, and the conditional uncertainty about $Y$ given $\bm{X}$ is quantified by the conditional Shannon entropy $H(Y|\bm{X})$.

A natural question that arises in this setting is \textit{how the information %
about the target $Y$ is distributed over the different components of the source $\bm{X}$}. Is some of this information uniquely delivered by a particular component $X_i$? Or perhaps some of it redundantly contained in multiple components? Or some of it may only be available when accessing multiple components jointly, while not being contained by any individual source?

Although there is a long tradition investigating such questions, a general formal framework to address them was developed only in 2010 in the form of \emph{Partial Information Decomposition} (PID) \cite{williams2010nonnegative}. When considering $n=2$ sources, the total information about $Y$ given by $X_1$ and $X_2$ is decomposed as
\begin{equation}
    I(X_1,X_2;Y) = \Pi(\{1\},\{2\}) + \Pi(\{1,2\}) + \sum_{i=1}^2 \Pi(\{i\}),
\end{equation}
where the four resulting components --- known as `information atoms' --- can be described as:
\begin{itemize}
    \item \textit{Unique information}: information about $Y$ uniquely contained in $X_i$, denoted by $\Pi(\{i\}), i\in\{1,2\}$.
    \item \textit{Redundant information}: information about $Y$ redundantly contained in both $X_1$ and $X_2$, denoted by $\Pi(\{1\}\{2\})$.
    \item \textit{Synergistic information}: information about $Y$ synergistically contained in $X_1$ and $X_2$ jointly but not contained in either of them individually, denoted by , $\Pi(\{1,2\})$.
\end{itemize}
Furthermore, considering the well-known chain rule of the mutual information, which states that $I(X_1,X_2;Y) = I(X_1;Y) + I(X_2;Y|X_1)$, it is natural to demand %
the following relationships:
\begin{align}
I(X_1:Y) &= \Pi(\{1\}\{2\}) + \Pi(\{1\}),\\
I(X_2:Y) &= \Pi(\{1\}\{2\}) + \Pi(\{2\}),\\
I(X_1:Y|X_2) &= \Pi(\{1,2\}) + \Pi(\{1\}),\\
I(X_2:Y|X_1) &= \Pi(\{1,2\}) + \Pi(\{2\}).
\end{align}
Thus, the joint mutual information is made up of the sum of all four atoms; the marginal mutual information %
consists of unique information plus redundancy; and the conditional mutual information consists of unique information plus synergy.%

How does this decomposition generalise to a larger number of sources? A key principle for carrying out this extension is the idea of \textit{containment} of information in different subsets of source variables \cite{gutknecht2021bits}, which we explain in the following. In the case of $n=2$ sources, there are three possible collections of (indices of) source variables: $\{1\}$, $\{2\}$, and $\{1,2\}$. Every possible type of information containment among those variables can be expressed as a table of binary encodings, where the columns correspond to possible collections of source variables and the rows contain zeros and ones describing containment or non-containment of information. Table~\ref{tab:info_containment} shows this construction for $n=2$. 

\begin{table}[h!]
\centering
\begin{tabular}{|c|c|c|c|}
\hline
Type & \{1\} & \{2\} & \{1,2\} \\
\hline
Unique to $X_1$ & 1 & 0 & 1 \\ 
Unique to $X_2$ & 0 & 1 & 1 \\ 
Redundancy & 1 & 1 & 1 \\ 
Synergy & 0 & 0 & 1 \\ 
\hline
\end{tabular}
\caption{Information containment in subsets of source variables for \( n=2 \) leading to four information atoms.}
\label{tab:info_containment}
\end{table}

As explained in Ref.~\cite{gutknecht2021bits}, these are all the possible containment relationships that satisfy some minimal requirements --- namely, that assignment of 0's and 1's are monotonic with respect to inclusion (e.g., if some information is contained in  $I(X_1:Y)$, it must also be contained in $I(X_1,X_2:Y)$), and that the collection of all sources should always be assigned a 1.
Building on these ideas, one can decompose the information given by $n$ source variables by accounting for all possible ways in which information can be contained within subsets of sources. Each such pattern of containment corresponds to a row in the associated containment table, and defines a distinct information atom (for more details, see Ref.~\cite{gutknecht2021bits}).

We can refer to an atom more concisely (i.e. without writing its corresponding row in the table) via the sets $\mathbf{a}_1,\ldots,\mathbf{a}_m \subseteq [n]$ it is contained in, leaving it implicit that it is not contained in any other set. In fact, due to monotonicity it is sufficient to refer to an atom via the \textit{smallest sets that it is contained in} (with respect to the subset relation). It is then left implicit that it is also contained in any superset of those sets. This is the standard notation used in the PID literature and the one we have been using throughout this paper. For example, the information unique to $X_1$ is $\Pi(\{1\})$ since $\{1\}$ is the smallest set of source variables that this information is contained in. By monotonicity, it is understood that it must also be in any superset of it, namely $\{1,2\}$. And it is also understood that it is not contained in $\{2\}$.

So in general, given subsets of source variables $\mathbf{a}_1,\ldots,\mathbf{a}_m \subseteq [n]$, the atom $\Pi(\mathbf{a}_1,\ldots,\mathbf{a}_m)$ is understood as the information contained exactly in all $I(\mathbf{a}_i:Y)$, and by monotonicity also in any sets that are supersets of some $\mathbf{a}_i$, but not contained anywhere else. Note that because of monotonicity, the domain of $\Pi$ reduces to the so called \textit{antichains} of the partial order $([n],\subseteq)$, i.e. sets $\mathbf{a}_1,\ldots,\mathbf{a}_m$ such that no set is a superset of another set. For instance, the meaning of $\Pi(\{1\})$ and $\Pi(\{1\}\{1,2\})$ would be exactly the same. Hence, we can remove $\{1,2\}$ and only refer to the atom by the smallest set it is contained in. We refer to antichains of source variables by greek letters $\alpha$ , $\beta$, etc. and denote the corresponding atoms by $\Pi(\alpha)$, $\Pi(\beta)$, etc.

This construction of a decomposition where the components (atoms) are defined by their specific containment relations immediately suggests a quantitative relationship between the atoms and the different mutual information terms $I(\mathbf{a}:Y)$. Any such mutual information should be the sum of atoms that are contained in it. These are exactly the ones associated with antichains containing some subset of $\mathbf{a}$ as an element
\begin{equation} \label{eq:consistency_equation}
I(\mathbf{a};Y) = \sum_{\exists \mathbf{b} \subseteq \mathbf{a}: \mathbf{b} \in \alpha} \Pi(\alpha)
\end{equation}
This is the basic \textit{consistency equation} of PID. We will adopt as a minimal notion of a PID any set of quantities $\Pi(\alpha)$ that satisfy \ref{eq:consistency_equation}. There is a considerable literature on additional properties one might demand of a PID. We will indicate where such additional properties come into play when necessary, but for the most part the results presented in this paper only depend on \ref{eq:consistency_equation}. Note that due to the chain rule for mutual information the consistency equation can also be written in terms of conditional mutual information as
\begin{equation} \label{eq:consistency_equation_cmi}
I(\mathbf{a}^C;Y|\mathbf{a}) = \sum_{\neg \exists \mathbf{b} \subseteq \mathbf{a}: \mathbf{b} \in \alpha} \Pi(\alpha)
\end{equation}
which is obtained by subtracting the joint mutual information from both sides of \ref{eq:consistency_equation} and multiplying them by minus one.

One important fact about PID atoms is that they have a natural order in terms of their accessibility, i.e. we can say that an atom $\Pi(\alpha)$ is "lower" than $\Pi(\beta)$ just in case $\Pi(\beta)$ is accessible via all sets of source variables via which $\Pi(\alpha)$ is accessible. More formally, this means that whenever there is a $1$ in the row of the containment table corresponding to $\Pi(\beta)$ there must also be a 1 in the row corresponding to $\Pi(\alpha)$. In terms of the antichains $\alpha$, $\beta$ this order can be shown to be given by
\begin{equation}
\alpha \preceq \beta \Leftrightarrow \forall \mathbf{b} \in \beta  \exists  \mathbf{a} \in \alpha : \mathbf{b} \supseteq \mathbf{a} 
\end{equation}
Intuitively, if all the sets of sources from which we obtain $\Pi(\beta)$ enclose a set from which we obtain $\Pi(\alpha)$, then if  $\Pi(\beta)$ is accessible  $\Pi(\alpha)$ must be accessible as well. The set of antichains $\mathcal{A}$ equipped with the partial order $\preceq$ forms a lattice conventionally called the \textit{redundancy lattice} in the PID literature \cite{williams2010nonnegative}. 

Within the approach adopted here, the terminology of the "redundancy lattice" can be understood as follows. If we take a particular antichain $\alpha = \{\mathbf{a}_1,\ldots,\mathbf{a}_m\}$ and sum all PID atoms downwards on the lattice, we recover a notion of redundant information, denoted $I_\cap(\alpha; Y)$, that the source collections $\mathbf{a}_1,\ldots,\mathbf{a}_m$ jointly provide about the target variable $T$:

\begin{equation}
I_\cap(\alpha;Y) = \sum_{\beta \preceq \alpha} \Pi(\beta).
\end{equation}

This construction precisely captures all those atoms contained in \textit{every} mutual information term $I(\mathbf{a}_i;Y)$, hence capturing their redundancy. Consequently, redundancy terms corresponding to lower antichains in the lattice are fully included within redundancy terms corresponding to higher antichains. In this way, the redundancy lattice expresses the nested, hierarchical structure of redundant information. 

Figure \ref{fig:red_lattice_n_3} illustrates the lattice $(\mathcal{A},\preceq)$ for $n=3$. The all-way redundancy between all three sources appears at the very bottom of the lattice because it is the most accessible information atom.  The all-way synergy atom appears at the very top of the lattice since it is the least accessible information atom.

Note the contrast to the original exposition of PID theory \cite{williams2010nonnegative}, which begins with an analysis of redundant information and subsequently defines the information atoms implicitly. In contrast, the approach taken here introduces the PID atoms directly as fundamental quantities.  This atom-focused perspective is particularly suited to the objectives of this paper since our analysis exclusively relies on structural relationships between atoms and mutual information and conditional mutual information as specified in equations \eqref{eq:consistency_equation} and \eqref{eq:consistency_equation_cmi}. Consequently, explicit reference to an underlying redundancy function becomes unnecessary. For further details on the relationship between the atom-focused perspective and the conventional PID framework, see \cite{Gutknecht2025babel}.

\begin{figure}
    \centering
    \includegraphics[width=1\linewidth]{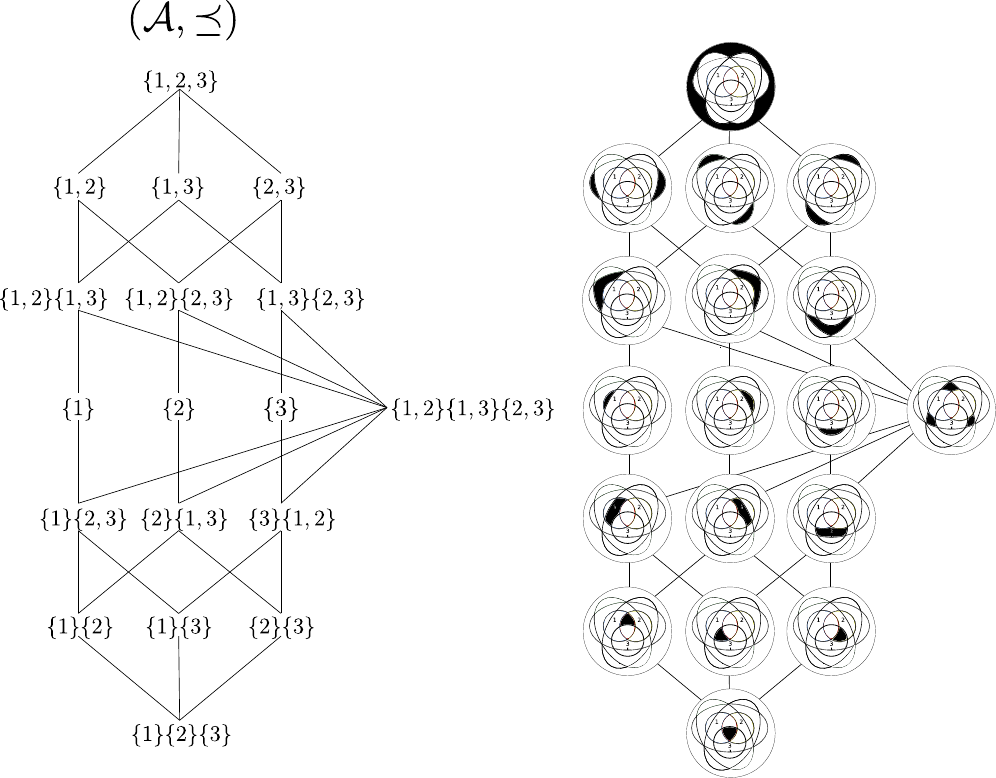}
    \caption{Lattice of PID atoms for $n=3$ source variables (left). Corresponding representation of PID atoms in information diagrams (right).}
    \label{fig:red_lattice_n_3}
\end{figure}

\section{Bounds on information atoms derived from summary measures}
\label{app:bounds}
For the interpretation of the degrees of redundancy and vulnerability in PID terms it is important to understand the bounds placed by certain values of $\bar{r}$ and $\bar{v}$ on different kinds information atoms. We have the following proposition (for the definitions of $I_r^{(k)}$ and $I_v^{(k)}$ see \ref{sec:different_views}):
\begin{proposition}
Assuming non-negativity of PID atoms, $\frac{I_r^{(0)} }{I(\bm X;Y)} \geq 1- \bar{r}$    and $\frac{I_v^{(0)} }{I(\bm X;Y)} \geq 1- \bar{v}$.
\end{proposition}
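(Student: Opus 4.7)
The plan is to convert each inequality into a manifestly non-negative sum over the atoms, exploiting the fact that both $\bar{r}$ and $\bar{v}$ are weighted averages of an integer index $k \in \{0,1,\ldots,n\}$ against the grouped contributions $I_r^{(k)}$ and $I_v^{(k)}$, respectively. The key algebraic identity I would use is that both decompositions partition the total mutual information, i.e.
\begin{equation}
I(\bm X;Y) = \sum_{k=0}^n I_r^{(k)}(\bm X;Y) = \sum_{k=0}^n I_v^{(k)}(\bm X;Y),
\end{equation}
together with the definitional rewrites $\bar{r}\,I(\bm X;Y) = \sum_{k=1}^n k\, I_r^{(k)}(\bm X;Y)$ and $\bar{v}\,I(\bm X;Y) = \sum_{k=1}^n k\, I_v^{(k)}(\bm X;Y)$.

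For the first inequality, I would compute
\begin{equation}
I_r^{(0)}(\bm X;Y) - (1-\bar{r})\,I(\bm X;Y) = I_r^{(0)}(\bm X;Y) - I(\bm X;Y) + \sum_{k=1}^n k\, I_r^{(k)}(\bm X;Y).
\end{equation}
Substituting $I(\bm X;Y) = I_r^{(0)}(\bm X;Y) + \sum_{k=1}^n I_r^{(k)}(\bm X;Y)$, the $I_r^{(0)}$ terms cancel and what remains is
\begin{equation}
I_r^{(0)}(\bm X;Y) - (1-\bar{r})\,I(\bm X;Y) = \sum_{k=1}^n (k-1)\, I_r^{(k)}(\bm X;Y).
\end{equation}
Since $k-1 \geq 0$ for $k \geq 1$ and each $I_r^{(k)}(\bm X;Y)$ is a sum of atoms, the non-negativity assumption on PID atoms makes this expression non-negative. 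Dividing through by $I(\bm X;Y)$ (assumed positive, as otherwise $\bar{r}$ is not defined) gives the claimed bound.

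The argument for $\bar{v}$ is a verbatim translation: replace $I_r^{(k)}$ by $I_v^{(k)}$ and $\bar{r}$ by $\bar{v}$ throughout, using $\bar{v}\,I(\bm X;Y) = \sum_{k=1}^n k\, I_v^{(k)}(\bm X;Y)$ in place of the redundancy identity. No step here is a real obstacle --- the whole proof is a one-line rearrangement. The only thing worth being careful about is the role of the non-negativity hypothesis, which is essential: without it, the residual sum $\sum_{k=1}^n (k-1) I_r^{(k)}$ need not be non-negative, and the bound can genuinely fail.
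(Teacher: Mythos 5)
Your proposal is correct and is essentially the paper's own argument: both rest on the partition $\sum_{k=0}^n I_r^{(k)} = I(\bm X;Y)$ and the bound $\sum_{k\geq 1} I_r^{(k)} \leq \sum_{k\geq 1} k\, I_r^{(k)}$, which holds by non-negativity of the atoms; your version merely makes the non-negative residual $\sum_{k\geq 1}(k-1)I_r^{(k)}$ explicit. The same remark applies to the $\bar{v}$ case, which the paper likewise treats as a verbatim analogue.
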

\begin{proof}
We argue the case for $\bar{r}$. The argument for $\bar{v}$ is analogous.  First, the normalized k-redundancies must sum up to 1:
\begin{equation}
   \sum_{k=0}^n \frac{I_r^{(k)} }{I(\bm X;Y)} = 1
\end{equation}
Thus, 
\begin{align}
\frac{I_r^{(0)} }{I(\bm X;Y)} &= 1 -  \left( \sum_{k=1}^n \frac{I_r^{(k)} }{I(\bm X;Y)}\right)\\
&\geq 1 -  \left( \sum_{k=1}^n k \frac{I_r^{(k)} }{I(\bm X;Y)}\right) = 1 - \bar{r}
\end{align}
\end{proof}
Note that for $\bar{r}>1$ or $\bar{v}>1$ the statement is true but trivial because assuming non-negativity of atoms already implies that $I_r^{(0)}$ and $I_v^{(0)}$ must be non-negative.
\begin{corollary} As special cases the proposition implies
\begin{enumerate}
\item $\bar{r}<1$ means that there must be some source-level synergy, i.e. $I_r^{(0)}>0$.
\item $\bar{r}<0.5$ means that source-level synergy is predominant, i.e. $\frac{I_r^{(0)}}{I(\bm X;Y)}>0.5$.
\item $\bar{v}<1$ means that there must be some robustness, i.e. $I_v^{(0)}>0$.
 \item $\bar{v}<0.5$ means that robustness is predominant, i.e. $\frac{I_v^{(0)}}{I(\bm X;Y)}>0.5$.
\end{enumerate}    
\end{corollary}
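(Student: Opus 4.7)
The plan is to deduce each of the four items as an immediate specialization of the preceding proposition, which furnishes the two inequalities $I_r^{(0)}/I(\bm X;Y) \geq 1-\bar r$ and $I_v^{(0)}/I(\bm X;Y) \geq 1-\bar v$ under the assumption of non-negative atoms. The only work to do is to substitute the appropriate numerical bound on $\bar r$ or $\bar v$, so the proof will be a short chain of inequalities rather than a substantive argument. I would not restate the proof of the underlying proposition.

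Concretely, for item 1 I would assume $\bar r < 1$, apply the proposition to get $I_r^{(0)}/I(\bm X;Y) \geq 1-\bar r > 0$, and conclude $I_r^{(0)}>0$, remarking that $I(\bm X;Y)>0$ is implicit since otherwise $\bar r$ is undefined (as noted earlier in the paper). For item 2 I would assume $\bar r<0.5$, whence $1-\bar r > 0.5$ and the proposition yields $I_r^{(0)}/I(\bm X;Y)>0.5$, i.e. source-level synergy carries more than half of the total mutual information, justifying the word "predominant". Items 3 and 4 are completely analogous, using the $\bar v$-inequality of the proposition in place of the $\bar r$-inequality, giving $I_v^{(0)}>0$ when $\bar v<1$ and $I_v^{(0)}/I(\bm X;Y)>0.5$ when $\bar v<0.5$.

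Since the whole argument is a direct consequence of the proposition, there is essentially no obstacle; the "hard part" is merely organizational: presenting the four items in parallel without unnecessary repetition. I would structure the proof as two short paragraphs—one handling the redundancy side (items 1 and 2) and one the vulnerability side (items 3 and 4)—each of which quotes the relevant inequality from the proposition and then reads off the threshold at $\bar r,\bar v \in \{1,0.5\}$. A brief closing remark could acknowledge that the non-negativity assumption on atoms is inherited from the proposition and is essential, since without it the inequalities $I_r^{(0)}/I(\bm X;Y) \geq 1-\bar r$ and $I_v^{(0)}/I(\bm X;Y) \geq 1-\bar v$ need not hold.
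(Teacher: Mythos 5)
Your proposal is correct and matches the paper's treatment: the corollary is presented there as an immediate specialization of the preceding proposition, obtained exactly as you describe by substituting the thresholds $\bar r,\bar v<1$ and $\bar r,\bar v<0.5$ into the inequalities $I_r^{(0)}/I(\bm X;Y)\geq 1-\bar r$ and $I_v^{(0)}/I(\bm X;Y)\geq 1-\bar v$. Your added remarks on $I(\bm X;Y)>0$ and the inherited non-negativity assumption are consistent with the paper and require no changes.
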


The interpretation of values of $\bar{r}$ and $\bar{v}$ larger than one is slightly more complicated, as it involves the components of these measures that distinguish between different orders of redundancy and vulnerability—where higher-order terms are weighted more heavily. Thus, values larger than one can always be explained in two distinct ways: either by a large weight on lower-order terms or by a smaller weight on higher-order terms. Hence, in general the bounds on redundancy and vulnerability to be inferred from such values depend on $n$ since this determines how many orders of redundancy or vulnerability are distinguished. We have the following proposition for these bounds:

\begin{proposition}
Assuming non-negativity of information atoms,
\begin{equation}
\frac{\sum_{k=2}^n I_r^{(k)}}{I(\bm X;Y)} \geq \frac{\bar{r}-1}{n-1}
\end{equation}
and
\begin{equation}
\frac{\sum_{k=2}^n I_v^{(k)}}{I(\bm X;Y)} \geq \frac{\bar{v}-1}{n-1}
\end{equation}
\end{proposition}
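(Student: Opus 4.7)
The plan is to introduce the normalised quantities $p_k := I_r^{(k)}/I(\bm X;Y)$, which, under the non-negativity assumption, form a probability distribution over $k \in \{0,1,\dots,n\}$, i.e.\ $p_k \geq 0$ and $\sum_{k=0}^n p_k = 1$. By definition, $\bar r = \sum_{k=0}^n k p_k$, so the problem reduces to a purely elementary inequality: given a probability distribution $(p_k)$ on $\{0,\dots,n\}$ with mean $\bar r$, show that $\sum_{k=2}^n p_k \geq (\bar r - 1)/(n-1)$.

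First, I would split the mean as $\bar r = p_1 + \sum_{k=2}^n k p_k$ and use $p_1 = 1 - p_0 - \sum_{k=2}^n p_k$ (which is where $\sum p_k = 1$ enters) to rewrite
\begin{equation}
\bar r - 1 = -p_0 + \sum_{k=2}^n (k-1)\, p_k .
\end{equation}
Since $p_0 \geq 0$, this gives $\bar r - 1 \leq \sum_{k=2}^n (k-1)\, p_k$. Then, because $k \leq n$ in the summation range, each coefficient satisfies $k-1 \leq n-1$, and non-negativity of $p_k$ lets me upper-bound the sum by $(n-1)\sum_{k=2}^n p_k$. Dividing by $n-1 > 0$ (the case $n=1$ is trivial since then the left-hand side is an empty sum and $\bar r \leq 1$) yields the claimed inequality for $\bar r$.

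The argument for $\bar v$ is word-for-word identical after replacing $I_r^{(k)}$ with $I_v^{(k)}$ and $\bar r$ with $\bar v$, using that under non-negativity the normalised $k$-vulnerabilities also form a probability distribution with mean $\bar v$. I would therefore state the $\bar r$ case in full and simply remark that the $\bar v$ case follows by the same argument.

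There is no substantive obstacle here: the only place where non-negativity is actually used is to ensure $p_0 \geq 0$ (to drop the $-p_0$ term) and to turn the coefficient bound $k-1 \leq n-1$ into a genuine inequality after summation. The mild subtlety worth flagging is that the bound is tight precisely when all mass above degree $1$ is concentrated at $k = n$ and $p_0 = 0$, which provides a useful sanity check on the constant $1/(n-1)$.
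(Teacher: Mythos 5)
Your proof is correct, but it takes a genuinely different route from the paper's. The paper argues by identifying what it asserts to be the extremal configuration --- all redundancy concentrated at the highest order $k=n$, with $I_r^{(0)}=0$ and the rest at $k=1$ --- and then computes $\bar{r}$ in that scenario to read off the bound, leaving implicit the claim that this configuration indeed minimises $\sum_{k=2}^n I_r^{(k)}/I(\bm X;Y)$ for a given $\bar{r}$. You instead give a direct chain of inequalities valid for \emph{every} admissible distribution of atoms: writing $\bar{r}-1 = -p_0 + \sum_{k=2}^n (k-1)p_k$, dropping $-p_0\leq 0$, and bounding $k-1\leq n-1$ before summing. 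This is arguably the more rigorous presentation, since it proves the bound uniformly rather than only exhibiting a candidate extremiser; your closing observation that equality requires $p_0=0$ and all mass above degree $1$ sitting at $k=n$ recovers precisely the paper's extremal scenario, so the two arguments are consistent and yield the same constant. The one hypothesis worth making explicit in your write-up is that $(p_k)$ being a probability distribution uses both $I(\bm X;Y)>0$ and the PID sum rule $\sum_{k=0}^n I_r^{(k)} = I(\bm X;Y)$, which the paper also relies on implicitly.
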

\begin{proof}
Again we argue the case $\bar{r}$ with the argument for $\bar{v}$ proceeding in the same way. For any given value of $\bar{r}>1$ the scenario requiring the least proportion of redundancy is the one where all redundancy is of the highest-order and there is no source-level synergy. In this case, we have
\begin{align}
\bar{r} &= \frac{I_r^{(1)} }{I(\bm X;Y)} + n\frac{I_r^{(n)}}{I(\bm X;Y)} \\
&= \left(1-\frac{I_r^{(n)}}{I(\bm X;Y)}\right) + n\frac{I_r^{(n)}}{I(\bm X;Y)} \\
&= (n-1) \frac{I_r^{(n)}}{I(\bm X;Y)}+1
\end{align}
which implies
\begin{equation}
\frac{\bar{r} -1}{n-1}= \frac{I_r^{(n)}}{I(\bm X;Y)}
\end{equation}
This is therefore the minimal value required to achieve the given value of $\bar{r}$.
\end{proof}
\begin{corollary}
The proposition entails in particular that
\begin{enumerate}
    \item $\bar{r}>1$ means that there must be some proper redundancy, i.e. $I_r^{(k)}>0$ for some $k>1$.
    \item $\bar{r}>\frac{1}{2}(n+1)$ means that redundancy is predominant, i.e. $\frac{\sum_{k=2}^nI_r^{(k)}}{I(\bm X;Y)}>0.5$.
    \item $\bar{v}>1$ means that there must be some vulnerability of degree larger one, i.e. $I_v^{(k)}>0$ for some $k>1$.
     \item $\bar{v}> \frac{1}{2}(n+1)$ means that vulnerability of degree larger one is predominant, i.e. $\frac{\sum_{k=1}^nI_v^{(k)}}{I(\bm X;Y)}>0.5$.
\end{enumerate}
\end{corollary}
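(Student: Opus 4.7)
The plan is to derive all four items as direct consequences of the preceding proposition, which provides the lower bounds
\[
\frac{\sum_{k=2}^n I_r^{(k)}}{I(\bm X;Y)} \geq \frac{\bar{r}-1}{n-1} \quad \text{and} \quad \frac{\sum_{k=2}^n I_v^{(k)}}{I(\bm X;Y)} \geq \frac{\bar{v}-1}{n-1},
\]
combined with the standing assumption that all information atoms are non-negative. Each item is then a one-step unpacking: items 1 and 3 use the inequalities in their "some term is positive" form, while items 2 and 4 use them in their quantitative form with a specific threshold for $\bar{r}$ or $\bar{v}$.

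For item 1, I would note that if $\bar{r} > 1$, then $(\bar{r}-1)/(n-1) > 0$, so the sum $\sum_{k=2}^n I_r^{(k)}$ must be strictly positive; since each $I_r^{(k)}$ is non-negative, at least one term with $k > 1$ must be strictly positive, giving the claim. For item 2, I would substitute the threshold $\bar{r} > (n+1)/2$ into the bound: this gives $(\bar{r} - 1)/(n-1) > ((n+1)/2 - 1)/(n-1) = ((n-1)/2)/(n-1) = 1/2$, which is exactly the predominance statement. Items 3 and 4 follow by the identical argument applied to the analogous bound for $\bar{v}$, with no new ingredients required beyond substituting $\bar{v}$ for $\bar{r}$ and $I_v^{(k)}$ for $I_r^{(k)}$.

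The main obstacle here is essentially nonexistent: the corollary is a two-line unpacking of the preceding proposition, and the only actual computation is verifying the arithmetic identity $((n+1)/2 - 1)/(n-1) = 1/2$. The one subtle point worth emphasising in writing the proof is that the passage from "the sum is strictly positive" to "at least one summand is strictly positive" in items 1 and 3 relies crucially on the non-negativity of atoms already assumed in the proposition; this is also why the conclusion in those items is existential rather than specifying which order of redundancy or vulnerability must be present.
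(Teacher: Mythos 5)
Your proposal is correct and matches the paper's intent exactly: the corollary is stated as a direct consequence of the preceding proposition, and your unpacking (positivity of the lower bound for items 1 and 3, the arithmetic substitution $((n+1)/2-1)/(n-1)=1/2$ for items 2 and 4, plus non-negativity of atoms to pass from a positive sum to a positive summand) is precisely the argument the paper leaves implicit. No gaps.
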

In 2. and 4. we can see the dependence on $n$: the larger $n$ the higher the values necessary to infer a predominance of redundancy or vulnerability. A more general statement regarding the required values to infer a certain percentage of redundancy or vulnerability is 
\begin{corollary}
\begin{equation}
\frac{\sum_{k=1}^nI_r^{(k)}}{I(\bm X;Y)}>\lambda \Leftrightarrow \bar{r} > \lambda(n+\frac{1-\lambda}{\lambda})
\end{equation}
\begin{equation}
\frac{\sum_{k=1}^nI_v^{(k)}}{I(\bm X;Y)}>\lambda \Leftrightarrow \bar{v} > \lambda(n+\frac{1-\lambda}{\lambda})
\end{equation}
\end{corollary}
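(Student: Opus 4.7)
My plan is to reduce the corollary to the preceding proposition by an algebraic rearrangement of the threshold, dispatch the easy direction directly, and invoke the tightness exhibited in that proposition's proof for the converse.

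First I would simplify the threshold using
\[
\lambda\Bigl(n + \tfrac{1-\lambda}{\lambda}\Bigr) \;=\; \lambda n + 1 - \lambda \;=\; 1 + \lambda(n-1),
\]
so the condition $\bar{r} > \lambda\bigl(n + \tfrac{1-\lambda}{\lambda}\bigr)$ is equivalent to $(\bar{r}-1)/(n-1) > \lambda$ for $n \geq 2$, and identically for $\bar{v}$. With this rewriting, the preceding proposition yields the ``$\Leftarrow$'' direction in one step: if $\bar{r} > 1 + \lambda(n-1)$ then
\[
\frac{\sum_{k=2}^{n} I_r^{(k)}}{I(\bm X;Y)} \;\geq\; \frac{\bar{r}-1}{n-1} \;>\; \lambda,
\]
and the vulnerability statement follows verbatim from the $\bar{v}$ half of the proposition.

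For the ``$\Rightarrow$'' direction I would invoke the tightness of the proposition's bound. Its proof constructs an explicit extremal distribution---all redundancy concentrated at the maximum degree $r(\alpha) = n$ with no source-level synergy---that saturates the inequality. This shows that $1 + \lambda(n-1)$ is the sharp value of $\bar{r}$ below which one cannot, on the basis of $\bar{r}$ alone, infer that the proper-redundancy fraction exceeds $\lambda$, giving the reverse implication at the threshold level. The $\bar{v}$ case is symmetric.

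The main obstacle is the ``$\Rightarrow$'' step, which is subtler than it looks: the biconditional is cleanest to read as a statement about the sharp threshold, since for $n \geq 3$ one can construct distributions with a large proper-redundancy fraction but moderate $\bar{r}$ (e.g., concentrating all redundancy at $k = 2$). To make the argument fully rigorous I would set up the small linear program ``minimise $\bar{r}$ over nonnegative $I_r^{(k)}$ subject to $\sum_{k} I_r^{(k)}/I(\bm X;Y) = 1$ and $\sum_{k \geq 2} I_r^{(k)}/I(\bm X;Y) = \lambda$'', show by a standard vertex argument that the minimum is attained by concentrating mass at the extreme degrees $k = 1$ and $k = n$, and recover $\bar{r}_{\min} = 1 + \lambda(n-1)$. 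This both certifies the tight threshold and makes precise the sense in which the corollary's ``$\Leftrightarrow$'' holds.
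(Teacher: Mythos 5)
Your algebraic reduction $\lambda\bigl(n + \tfrac{1-\lambda}{\lambda}\bigr) = 1 + \lambda(n-1)$ and your derivation of the ``$\Leftarrow$'' direction from the preceding proposition are exactly the content the paper relies on: the corollary is stated without proof, as an immediate rearrangement of the bound $\sum_{k\geq 2} I_r^{(k)}/I(\bm X;Y) \geq (\bar r - 1)/(n-1)$, so this half matches the paper's implicit argument. (Note that the corollary's left-hand sums start at $k=1$ rather than $k=2$; the ``$\Leftarrow$'' direction survives a fortiori since the $k\geq 1$ sum dominates the $k\geq 2$ sum.) You are also right, and it is worth stating explicitly, that the literal biconditional fails for a fixed distribution: placing all information at degree $k=1$ (or, for $n\geq 3$, at $k=2$) makes the left-hand fraction close to $1$ while $\bar r$ sits at $1$ or $2$, well below $1+\lambda(n-1)$. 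The paper's earlier corollary items are phrased one-directionally (``means that''), and the ``$\Leftrightarrow$'' here can only be read in your sharp-threshold sense: $1+\lambda(n-1)$ is the least $t$ such that $\bar r > t$ forces the fraction above $\lambda$. Your proposed certification of sharpness is the right idea but has the optimization turned around: to show the threshold cannot be lowered you must \emph{maximise} $\bar r$ subject to $\sum_{k\geq 2} I_r^{(k)}/I(\bm X;Y) \leq \lambda$; the maximum, $1+\lambda(n-1)$, is attained at precisely the configuration you name (mass $1-\lambda$ at degree $1$ and $\lambda$ at degree $n$), whereas minimising $\bar r$ under your stated constraints gives $2\lambda$ (mass at $k=0$ and $k=2$), not $1+\lambda(n-1)$. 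With that direction corrected, your argument is sound and in fact supplies the justification for the forward reading that the paper omits.
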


\section{Neural Network Implementation Details}
\label{app:NNets_details}
The neural networks in this work have been implemented using the ``nninfo'' python package for information-theoretic analysis of deep neural networks (\url{https://github.com/Priesemann-Group/nninfo}, \cite{ehrlich2023a}) which builds on the pytorch framework.

To achieve finite and non-trivial information dynamics, the activations of all neurons have been quantized stochastically to one of the two nearest of eight quantization levels between the bounds of the activation function $\sigma_\mathrm{min}$ and $\sigma_\mathrm{max}$ with equal distances $\epsilon = (\sigma_\mathrm{max}-\sigma_\mathrm{min})/(n_\mathrm{levels} - 1)$. The rounded activation values $\ell$ are obtained from the raw activation values $\hat{\ell}$ as
\begin{equation*}
    \ell = \epsilon\lambda + \sigma_\mathrm{min}\quad
\end{equation*}
where
\begin{equation*}
\quad \lambda = 
    \begin{cases} 
        \left\lceil \frac{\hat{\ell} - \sigma_\mathrm{min}}{\epsilon} \right\rceil & r > \left(\frac{\hat{\ell} - \sigma_\mathrm{min}}{\epsilon} \mod 1\right)\\[0.5cm]
      \left\lfloor \frac{\hat{\ell} - \sigma_\mathrm{min}}{\epsilon} \right\rfloor & r \leq \left(\frac{\hat{\ell} - \sigma_\mathrm{min}}{\epsilon} \mod 1 \right),
   \end{cases}
\end{equation*}
Here, $r \in [0, 1]$ is drawn i.i.d.~from a uniform distribution for each neuron and each evaluation.

The \emph{MNIST classifier network} is a fully-connected feed-forward deep neural network using a one-hot output representation and tanh activation functions. The networks were trained $10$ times for $10^4$ epochs with different Xavier random weight initializations, using cross entropy loss and a constant SGD learning rate of $0.01$. They achieve a final train set accuracy of $99.89(3)\%$ and a test set accuracy of $95.5(3)\%$.

For the \emph{Face autoencoder network}, the ``Labelled faces in the Wild''\cite{LFWTech} dataset is augmented by adding images that have been flipped horizontally as well as rotated by $-10$, $-5$, $0$, $5$ or $10$ degrees. Combining these augmentations, the dataset is expanded to ten times more samples which have similar statistics as the original images, which enables the estimation of small information-theoretic quantities. The images are preprocessed by resizing the original images to $40 \times 40$ pixels before cropping $4$ pixels from each border to produce $32 \times 32$ images with three color channels.

The network has three convolutional encoder layers with a kernel size of $3 \times 3$, padding of $1$ and $2 \times 2$ max pooling layers after each. The output of the third convolutional layer is linearized and fully connected to a variable-size bottleneck layer, after which three transpose convolutional layers with kernel size $3$ and stride of $2$ finally produce a reconstruction of the input image. As in the MNIST example, all layers use tanh activation functions. The autoencoder networks use mean square error and the Adam optimizer with an initial learning rate of $0.001$. The autoencoder networks were run $10$ times each for $n_\mathrm{b}=128$, $n_\mathrm{b}=20$, $n_\mathrm{b}=16$ and $n_\mathrm{b}=12$ neurons in the bottleneck layer each with different Xavier weight initializations. The final train and test set losses after $10^3$ epochs are summarized in \autoref{tab:face_losses}.

\begin{table}
\begin{tabular}{r|r|r}
$n_\mathrm{b}$ & Final train loss & Final test loss\\
\hline
$128$ & $9.43(9)\times 10^{-3}$ & $9.5(1)\times 10^{-3}$\\
$20$ & $1.70(4)\times 10^{-2}$ & $1.70(4)\times 10^{-2}$\\
$16$ & $1.86(3)\times 10^{-2}$ & $1.86(3)\times 10^{-2}$\\
$12$ & $2.12(4)\times 10^{-2}$ & $2.12(4)\times 10^{-2}$
\end{tabular}
\caption{Final train and test set mean square error losses for the face autoencoder network for different bottleneck sizes $n_\mathrm{b}$.}
\label{tab:face_losses}
\end{table}

\bibliography{references}
\bibliographystyle{apsrev4-1}

\end{document}